\newenvironment{insight}
{\mdfsetup{%
    nobreak=true,
	middlelinecolor=gray,
	middlelinewidth=1pt,
	backgroundcolor=gray!10,
    innertopmargin=5pt,
	roundcorner=5pt}
\begin{mdframed}}
{\end{mdframed}}
\newcommand{\citet}[1]{\cite{#1}}
\newcommandx{\unsure}[2][1=]{\todo[linecolor=green,backgroundcolor=green!25,bordercolor=green,#1]{\normalsize #2}}
\newcommandx{\improvement}[2][1=]{\todo[inline,linecolor=blue,backgroundcolor=blue!05,bordercolor=blue,#1]{\normalsize #2}}
\newcommandx{\info}[2][1=]{\todo[linecolor=yellow,backgroundcolor=yellow!25,bordercolor=yellow,#1]{#2}}
\newcommandx{\floatmodel}[2][1=]{\todo[inline,linecolor=red,backgroundcolor=yellow!25,bordercolor=yellow,#1]{#2}}
\newcommandx{\thiswillnotshow}[2][1=]{\todo[disable,#1]{#2}}
\newcommandx{\karol}[2][1=]{\todo[inline,linecolor=blue,backgroundcolor=blue!25,bordercolor=blue,caption={\normalsize \textbf{Karol}},#1]{\normalsize #2}}
\newcommandx{\antoine}[2][1=]{\todo[inline,linecolor=gray,backgroundcolor=red!25,bordercolor=red,caption={\normalsize
\textbf{Antoine}},#1]{\normalsize #2}}
\newtheorem{theorem}{Theorem}
\newtheorem{lemma}[theorem]{Lemma}
\newtheorem{claim}[theorem]{Claim}
\newtheorem{proposition}[theorem]{Proposition}
\newtheorem{observation}[theorem]{Observation}
\newtheorem*{maingoal*}{Main Question}
\numberwithin{theorem}{section}
\numberwithin{lemma}{section}
\numberwithin{claim}{section}
\numberwithin{corollary}{section}
\numberwithin{definition}{section}
\numberwithin{observation}{section}
\numberwithin{proposition}{section}
\newcommand{\Prb}{\mathbb{P}}
\newcommand{\norm}[1]{\left\lVert#1\right\rVert}
\newcommand{\norms}[1]{\lVert#1\rVert}
\newcommand{\floor}[1]{\left\lfloor #1 \right\rfloor}
\newcommand{\ceil}[1]{\left\lceil #1 \right\rceil}
\newcommand{\eps}{\varepsilon}
\newcommand{\Oh}{\mathcal{O}}
\newcommand{\Os}{\Oh^{\star}}
\newcommand{\N}{\mathbb{N}}
\newcommand{\Z}{\mathbb{Z}}
\newcommand{\Bb}{\mathcal{B}}
\newcommand{\Ll}{\mathcal{L}}
\newcommand{\Mm}{\mathcal{M}}
\newcommand{\real}{\mathbb{R}}
\newcommand{\poly}{\mathrm{poly}}
\newcommand{\Mod}[1]{\ (\mathrm{mod}\ #1)}
\newcommand{\prob}[2][]{\bold{Pr}_{#1}\left[ #2 \right]}
\newcommand{\angles}[1]{\langle {#1} \rangle}
\newcommand{\V}[1]{\normalfont{\mathrm{\textbf{#1}}}}
\newcommand{\trans}{\intercal}
\newcommand{\Vol}{\mathrm{Vol}}
\newcommand{\Ball}{\mathrm{Ball}}
\newcommand{\iv}[1]{\llbracket #1 \rrbracket}
\newcommand{\Rev}{\textbf{Rev}}
\newcommand{\dotprod}[1]{\langle #1 \rangle}
\newcommand{\CLLL}{\gamma_{\textsf{LLL}}}
\newcommand{\CGamma}{\Upsilon}
\newcommand{\LORange}{\Gamma_\textsf{LO}}
\newcommand{\Range}{\Gamma}
\renewcommand{\le}{\leqslant}
\renewcommand{\ge}{\geqslant}
\title{Improving Lagarias-Odlyzko Algorithm For Average-Case Subset Sum: Modular Arithmetic Approach}
\date{}
\author{
    Antoine Joux\footnote{CISPA Helmholtz Center for Information Security, Germany, \textsf{joux@cipa.de}. This work has been supported by the European Union's H2020 Programme under grant agreement number ERC-669891.}
    \and
    Karol W\k{e}grzycki\footnote{Saarland University and Max Planck Institute for Informatics,
        Saarbr\"ucken, Germany, \textsf{wegrzycki@cs.uni-saarland.de}. 
    This work is part of the project TIPEA that has
    received funding from the European Research Council (ERC) under the European Union's Horizon
    2020 research and innovation programme (grant agreement No 850979).}
}
\begin{document}

\maketitle

\thispagestyle{empty}
\begin{abstract} 

Lagarias and Odlyzko (J.~ACM~1985) proposed a polynomial time algorithm for
solving ``\emph{almost all}'' instances of the Subset Sum problem with $n$
integers of size $\Omega(\LORange)$, where $\log_2(\LORange) > n^2
\log_2(\gamma)$ and $\gamma$ is a parameter of the lattice basis reduction
($\gamma > \sqrt{4/3}$ for LLL).  The algorithm of Lagarias and Odlyzko is a
cornerstone result in cryptography. However, the theoretical guarantee on the
density of feasible instances has remained unimproved for almost 40 years.

In this paper, we propose an algorithm to solve ``\emph{almost all}'' instances
of Subset Sum with integers of size $\Omega(\sqrt{\LORange})$ after a single
call to the lattice reduction. Additionally, our argument allows us to solve the
Subset Sum problem for multiple targets while the previous approach could only
answer one target per call to lattice basis reduction. We introduce a modular
arithmetic approach to the Subset Sum problem. The idea is to use the lattice
reduction to solve a linear system modulo a suitably large prime.  We show that
density guarantees can be improved, by analysing the lengths of the LLL reduced
basis vectors, of both the primal and the dual lattices simultaneously. 

\end{abstract}

\begin{picture}(0,0)
\put(462,-170)
{\hbox{\includegraphics[width=40px]{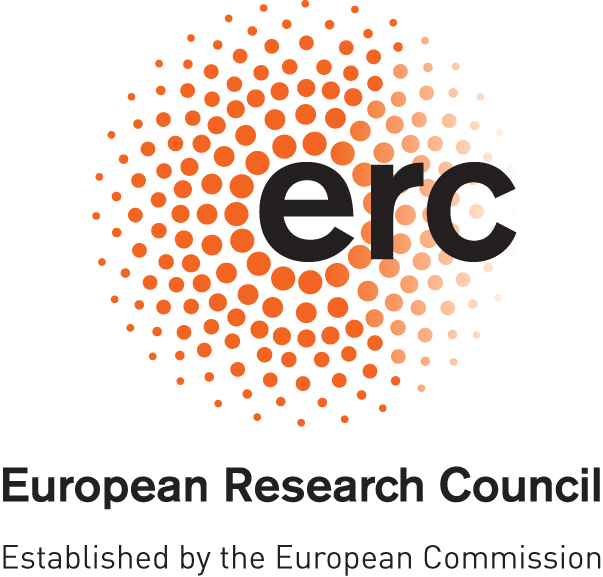}}}
\put(452,-230)
{\hbox{\includegraphics[width=60px]{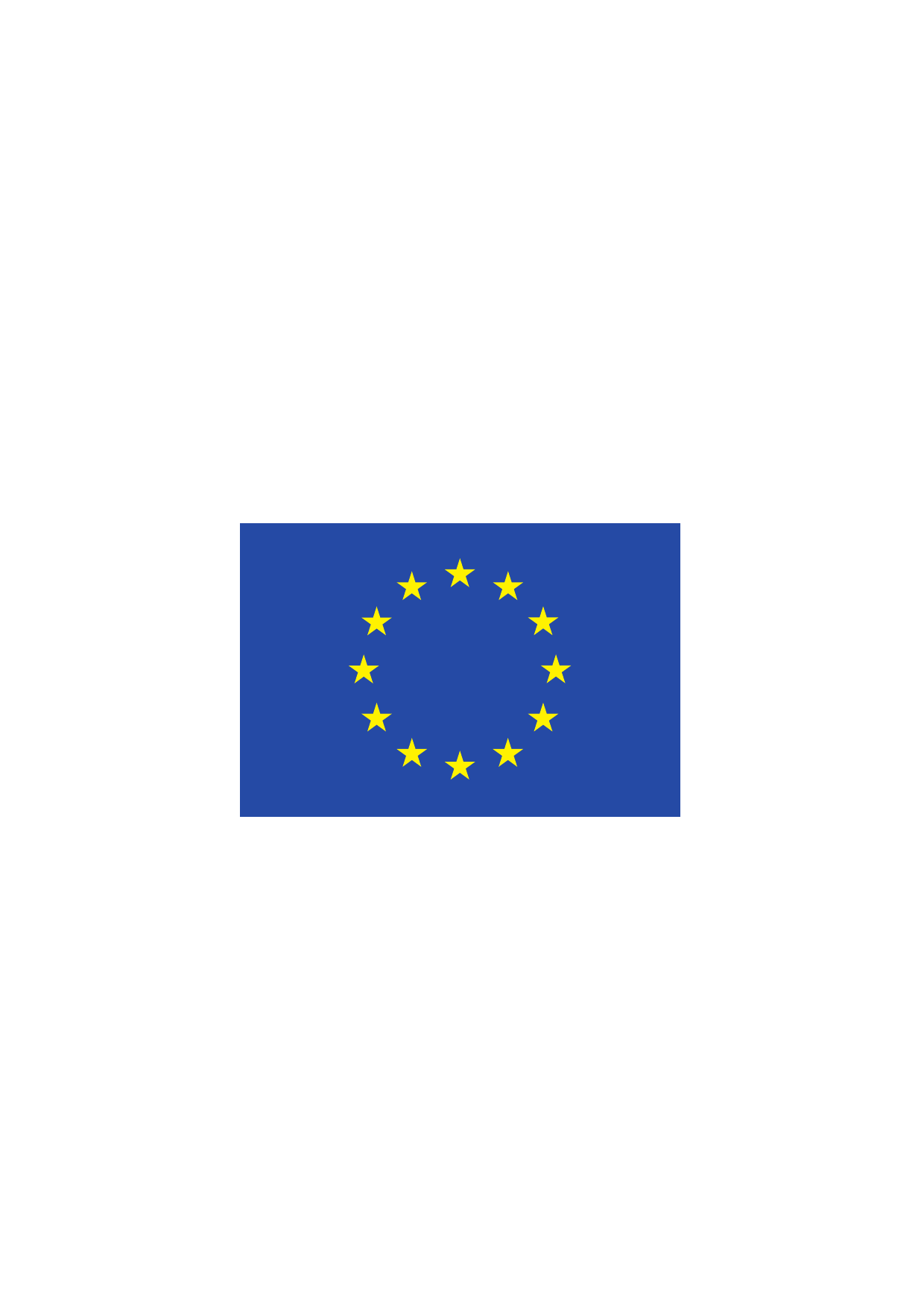}}}
\end{picture}

\clearpage
\setcounter{page}{1}

\section{Introduction}

The Subset Sum problem, a cornerstone of computational complexity and
cryptography, has been the subject of extensive research due to numerous practical implications. In the Subset Sum
problem, we are given $n$ positive integers $a_1,\ldots,a_n \in \Z$ and a
target $T \in \Z$. The task is to decide if there exists a subset of the given
numbers that sums to $T$.

In particular, the study of Subset Sum has seen numerous applications in
cryptography~\cite{crypto-merkle78,dinur},
balancing~\cite{average-case-balancing}, and combinatorial
optimization~\cite{knapsack-book}. With these applications in mind, our goal is
to propose an efficient algorithm to solve the Subset Sum for a large range of
instances. Indeed, since the problem is NP-hard, it is very unlikely that we
could propose an efficient algorithm to solve all instances of the problem. The
natural next step is to characterize which instances of the problem can be
solved efficiently. On that front, there is a classical result of
Bellman~\cite{bellman} that proposes a polynomial time algorithm for Subset Sum
when the numbers are polynomially bounded, i.e., the instance is \emph{dense}.
This inspired a long line of pseudopolynomial-time algorithms (see
e.g.,~\cite{bringmann2017near,koiliaris2019faster,abboud2022seth,icalp21,pseudopolynomial-polyspace}).

On the other end of the spectrum, there is a result by Lagarias and
Odlyzko~\cite{LO} for the \emph{sparse} instances.  This work marked a
significant milestone by proposing a polynomial-time algorithm capable of
solving ``\emph{almost all}'' instances of Subset Sum when the size of the numbers is
$\Omega(\LORange)$, where $\log_2(\LORange) > n^2 \log_2{\gamma}$
and $\gamma$ is a parameter associated with the lattice reduction.  Lagarias and
Odlyzko~\cite{LO} used the LLL (Lenstra-Lenstra-Lov\'asz~\cite{lll}) algorithm as
a lattice basis reduction, for which $\gamma = \CLLL > \sqrt{4/3}$.

To this day, the algorithm of Lagarias and Odlyzko is a cornerstone of modern
cryptography and is featured in numerous text-books on the
field~\cite{crypto-book-1,crypto-book-2,nguyen2010lll}. This algorithm was
subsequently
implemented~\cite{radziszowski1988solving,schnorr1994lattice,brickell1984solving},
simplified~\cite{Frieze} and even improved assuming access to an exact
CVP~\cite{coster1992improved,DBLP:conf/fct/JouxS91}. Nevertheless, the algorithm
of Lagarias and Odlyzko is currently the only known method\footnote{There are
several extensions of the LLL algorithm tailored for solving Subset Sum in practice,
see~\cite{schnorr1994lattice,lamacchia1991basis}.} to
solve low-density instances of Subset Sum, both in theory and practice
(see~\cite{odlyzko1990rise} for survey).

In this paper, we revisit and improve the result of Lagarias and
Odlyzko~\cite{LO}. We show an algorithm solving almost all instances of Subset
Sum when the size of the numbers is $\Omega(\sqrt{\LORange})$. More formally, we prove the
following theorem.

\begin{theorem} \label{thm:main-theorem} \label{thm:main}
    Let $a_1,\ldots,a_n \in \Z_{\ge 0}$ be integers, selected uniformly at
    random from $\{1,\ldots,\ceil{\sqrt{\LORange}}\}$, where $\LORange =
    (\CLLL)^{n^2} \cdot 2^{\Theta(n\log{n})}$.
    With overwhelming probability, after a single call to a basis reduction algorithm with
    parameter $\CLLL$, we obtain a polynomial time tester, that for every given
    $T \in \Z$ decides whether
    \begin{displaymath}
        \text{there exist } e_1,\ldots,e_n \in \{0,1\} \text{ such that }
        \sum_{i=1}^n e_i \cdot a_i = T.
    \end{displaymath}
    Our tester and its creation method are both deterministic.
    Their success probability only depends on the random choice of the integers 
    $a_1,\ldots,a_n$ and is at least $1-2^{-\Omega(n \log{n})}$.
\end{theorem}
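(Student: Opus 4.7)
The plan is to replace the LO-style lattice with one indexed by a prime modulus $p \approx \sqrt{\LORange}$. Fix such a prime and consider the rank-$n$ lattice
\[
L_p = \bigl\{ x \in \Z^n : \textstyle\sum_i x_i a_i \equiv 0 \pmod{p}\bigr\},
\]
of determinant $p$. Since $p$ and $L_p$ depend only on $a$, a single LLL call on $L_p$ serves every target $T$; this alone breaks the ``one LLL call per target'' barrier of LO. Given $T$, any integer vector $y$ with $\sum_i y_i a_i \equiv T \pmod{p}$ (obtained by a trivial modular inversion) differs from a prospective $0/1$ witness $e$ by an element of $L_p$. The tester thus runs Babai nearest-plane on $y$ using the precomputed reduced basis to obtain $\hat e \in \Z^n$, and outputs YES iff $\hat e \in \{0,1\}^n$ and $\sum_i \hat e_i a_i = T$ exactly.

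Three probabilistic events over random $a$ must each hold with probability $1-2^{-\Omega(n \log n)}$. (i) No nonzero vector in $\{-1,0,1\}^n$ lies in $L_p$: since $\Prb_a[\sum_i x_i a_i \equiv 0 \pmod p] \le 1/p$ for any fixed nonzero $x$, a union bound over the $3^n$ sign vectors suffices once $p \gg 3^n$, and this guarantees that the sought $0/1$ witness is the unique $0/1$ representative of its $L_p$-coset. (ii) A \emph{primal} bound $\lambda_1(L_p) \ge \rho$ for $\rho$ a constant multiple of $p^{1/n}$, by the same union bound against the short integer vectors in $L_p$. (iii) A \emph{dual} bound $\lambda_1(L_p^*) \ge \rho'/p$ on the dual lattice $L_p^* = \Z^n + p^{-1}(a_1,\ldots,a_n)\,\Z$; a symmetric counting argument applied after appropriate rescaling yields this with the analogous probability.

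Feeding (ii) and (iii) through LLL is where the factor-of-two improvement over LO appears. After LLL with parameter $\CLLL$ the Gram-Schmidt norms $\|b_i^*\|$ of the reduced basis of $L_p$ satisfy $\prod_i \|b_i^*\| = p$ and are tied together by the usual $\CLLL$-slack. A one-sided primal argument (as in LO) only controls $\|b_1\|$ and allows some $\|b_i^*\|$ to blow up by a factor of $\CLLL^{\Theta(n^2)}$, which is precisely what forces numbers of size $\LORange$. The two-sided bounds (ii) and (iii) squeeze \emph{every} $\|b_i^*\|$ to within a $\CLLL^{O(n)}$ factor of $p^{1/n}$; consequently, Babai's nearest-plane decoding error $\tfrac{1}{2}\bigl(\sum_i \|b_i^*\|^2\bigr)^{1/2} = \CLLL^{O(n)}\sqrt{n}\,p^{1/n}$ stays below $\lambda_1(L_p)/2$ exactly when $p = 2^{\Theta(n^2 \log \CLLL)}$, matching $\sqrt{\LORange}$ up to the constants absorbed into $\Theta(n \log n)$. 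Correctness of the tester then follows: Babai recovers the unique witness when it exists, and otherwise $\hat e$ fails either the $\{0,1\}^n$ check or the exact integer verification.

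The main obstacle is the \emph{simultaneous} handling of (ii) and (iii). LLL intrinsically only constrains $\prod_i \|b_i^*\| = \det(L_p)$, so a one-sided lower bound leaves room for one $\|b_i^*\|$ to be as large as $p\cdot\CLLL^{\Theta(n)}$ --- fatal for Babai at this density. Establishing the primal and dual bounds jointly with total failure probability $2^{-\Omega(n \log n)}$ and threading the $\CLLL$-slack through LLL without losing the square-root factor is the technical heart of the argument, and is what converts the two-sided lattice analysis into the promised improvement in the density threshold.
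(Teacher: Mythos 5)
Your plan is, at its core, the paper's: fix one prime $p\approx\sqrt{\LORange}$, make a single lattice-reduction call on a lattice depending only on $a$, and get the square-root gain by controlling the reduced basis through \emph{both} a primal and a dual counting argument. Your $L_p=\{x\in\Z^n:\dotprod{\V{a},x}\equiv 0 \Mod p\}$ is exactly $p$ times the dual of the paper's lattice $\Ll$, and your events (ii) and (iii) are the paper's two propositions ($\lambda_1\ge\Vol^{1/n}$ for the lattice and for its dual) with primal and dual roles swapped. Where you genuinely differ is the tester: the paper lifts the reduced basis vectors to multipliers $\mu_i$, uses that each row $(\iv{\mu_i a_1}_p,\ldots,\iv{\mu_i a_n}_p)$ has $\ell_1$ norm below $p/2$ and that the resulting matrix is invertible, and answers each target by solving that exact integer system; you instead decode the coset $y+L_p$ with Babai's nearest-plane. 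This dual-view tester can be made to work, but your correctness criterion for it is wrong as stated.

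Concretely, you require the nearest-plane error bound $\tfrac12\bigl(\sum_i\norm{\V{b}_i^\ast}^2\bigr)^{1/2}$ to lie below $\lambda_1(L_p)/2$. That can never happen, for any lattice, any basis, and any choice of $p$: by the AM--GM inequality $\bigl(\sum_i\norm{\V{b}_i^\ast}^2\bigr)^{1/2}\ge\sqrt{n}\,\bigl(\prod_i\norm{\V{b}_i^\ast}\bigr)^{1/n}=\sqrt{n}\,p^{1/n}$, while Minkowski's theorem gives $\lambda_1(L_p)\le\sqrt{n}\,p^{1/n}$, so the quantity you want below $\lambda_1/2$ is always at least $\lambda_1/2$ (both sides scale as $p^{1/n}$, so enlarging $p$ does not help). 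Hence ``Babai recovers the unique witness'' does not follow from what you wrote. The repair uses the \emph{other} side of your squeeze: nearest-plane outputs the unique lattice point $v$ with $y-v$ in the centered Gram--Schmidt box, so it returns $v_0=y-e$ (and thus $\hat e=e$) as soon as $\norm{e}\le\sqrt{n}<\tfrac12\min_i\norm{\V{b}_i^\ast}$. What you must therefore prove is a \emph{lower} bound $\min_i\norm{\V{b}_i^\ast}\ge(\CLLL)^{-(n-1)/2}p^{1/n}\gg\sqrt{n}$, which does follow from your events (ii)--(iii), but only by running the paper's chain argument (its Lemma on $\norm{\V{b}_k^\ast}\le(\CLLL)^{(n-1)/2}\Vol^{1/n}$) on the reversed dual basis, using the duality $\norm{\V{b}_i^\ast}\cdot\norm{\V{d}_{n-i+1}^\ast}=1$ to transfer the LLL inequality; with $p=(\CLLL)^{n^2/2}2^{\Theta(n\log n)}$ this gives $\min_i\norm{\V{b}_i^\ast}\ge\poly(n)\gg 2\sqrt{n}$ as needed. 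You assert the two-sided squeeze without deriving it, and that derivation is precisely the technical heart of the paper; combined with the vacuous success condition, the proposal as written does not establish the theorem, though with the corrected Babai condition and the mirrored chain argument it becomes a valid variant of the paper's proof. (Minor point: the $a_i$ are uniform on $\{1,\ldots,\ceil{\sqrt{\LORange}}\}$, not uniform mod $p$; the paper handles this by rejection sampling, and your union bounds should account for it.)
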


The improvement offered by~\cref{thm:main-theorem} is two-fold. First, in terms
of admissible range, our algorithm provides improvement from $\Omega(\LORange)$
down to $\Omega(\sqrt{\LORange})$ on numbers in the instance. Second, our algorithm is
\emph{oblivious} to the target. Namely, after a single call to the lattice basis
reduction, with input numbers $a_1,\ldots,a_n$ we can deterministically decide
on an answer in polynomial time. In that sense, our argument provides a succinct
certificate for ``almost all'' instances of sparse Subset Sum. Precisely this
question was already considered by Furst and Kannan~\cite{furst1989succinct} who
used lattice basis reduction to bound the proof complexity of Subset Sum.  They
strengthened the Lagarias and Odlyzko algorithm and showed that ``almost all''
instances with numbers in $\{1,\ldots,\LORange\}$ have a polynomial size proof.
Our argument again reduces the range down to $\{1,\ldots,\ceil{\sqrt{\LORange}}\}$, and our certificates are
simpler. In fact, the data-structure behind~\cref{thm:main-theorem}
invokes a single matrix-vector multiplication.

\subsection{Our Technique: Modular Arithmetic Approach}

Now, we elaborate on the techniques that we introduce. In this note, we let
$\iv{a}_p$ be a unique integer in $(-p/2,p/2)$ such
that $a \equiv \iv{a}_p \Mod p$ (see~\cref{sec:prelim} for notation).
Consider vectors $\V{a} = (a_1,\ldots,a_n) \in \Z^n$, $\V{e} =
(e_1,\ldots,e_n) \in \{0,1\}^n$ and an integer $T$ such that
\begin{align*}
    e_1 \cdot a_1 + e_2 \cdot a_2 + \ldots + e_n \cdot a_n & = T.\\
\intertext{Obviously, for any integers $\mu,p \in \Z$ the following also holds}
    e_1 \cdot \iv{\mu a_1}_p + e_2\cdot \iv{\mu a_2}_p +  \ldots + e_n \cdot \iv{\mu
    a_n}_p & \equiv \iv{\mu T}_p \Mod p.
\end{align*}
Next, assume, that we have managed to find $\mu$ and $p$, such that:
\begin{align*}\label{eq:assumption}
    \sum_{i=1}^n |\iv{\mu a_i}_p| < p/2.
\end{align*}
Then there are no overflows modulo $p$ and actually, we can write
\begin{displaymath}
    e_1 \cdot \iv{\mu a_1}_p + e_2\cdot \iv{\mu a_2}_p + \ldots + e_n \cdot \iv{\mu a_n}_p = \iv{\mu T}_p
    .
\end{displaymath}
We will show that if the density is small, then there are integers $\mu_1,\ldots,\mu_n \in
\Z$ such that for each of them \emph{(a)} the $\ell_1$ norm is bounded, i.e., $\norm{\iv{\mu_i \V{a}}_p}_1 < p/2$, and
\emph{(b)} the vectors $\iv{\mu_i \V{a}}_p$ are linearly independent. More formally we
will prove the following theorem.

\begin{restatable}[Modular Arithmetic Approach]{theorem}{multipliers}\label{thm:multipliers}
	Let $\LORange \coloneqq \Theta((\CLLL)^{n^2} \cdot 2^{4 n \log_2{n}})$ and let $p = \Theta((\CLLL)^{0.5 n^2})$ be a prime.
There exists a polynomial time algorithm, that 
    given integers $a_1,\ldots,a_n \in \Z$ selected uniformly at random from
	$\{1,\ldots,\ceil{\sqrt{\LORange}}\}$ outputs, with overwhelming probability, integers
    $\mu_1,\ldots,\mu_n \in \Z$ such that the following $n
    \times n$ matrix
    \begin{displaymath}
        \Mm_p  \coloneqq
        \begin{pmatrix}
            \iv{\mu_1 a_1}_p & \iv{\mu_1 a_2}_p & \cdots & \iv{\mu_1 a_n}_p\\
            \iv{\mu_2 a_1}_p & \iv{\mu_2 a_2}_p & \cdots & \iv{\mu_2 a_n}_p\\
            \vdots & & & \vdots \\
            \iv{\mu_n a_1}_p & \iv{\mu_n a_2}_p & \cdots & \iv{\mu_n a_n}_p\\
        \end{pmatrix}
    \end{displaymath}
    has full rank. Moreover, for every $1 \le i \le n$ it holds that:
    \begin{displaymath}
        \norm{(\iv{\mu_i a_1}_p,\ldots,\iv{\mu_i a_n}_p)}_1 < \frac{p}{2}
        .
    \end{displaymath}
    More precisely, the procedure succeeds with probability $\ge
    1-2^{-\Omega(n \log{n})}$, which only depends on the
    random choice of integers $a_1,\ldots,a_n$.
\end{restatable}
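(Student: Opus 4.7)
The plan is to construct a single lattice whose LLL-reduced basis directly supplies the desired multipliers. Let $L \subseteq \Z^n$ be the rank-$n$ lattice generated by $\V{a} = (a_1,\ldots,a_n)$ together with $p\V{e}_1,\ldots,p\V{e}_n$; equivalently, $L = \{\mu \V{a} + p\V{k} : \mu \in \Z,\ \V{k} \in \Z^n\}$. Every $v \in L$ is congruent modulo $p$ to $\mu_v \V{a}$ for a uniquely determined $\mu_v \in \Z/p\Z$, and, assuming $\V{a} \not\equiv 0 \Mod{p}$ (which holds with overwhelming probability), one has $\det(L) = p^{n-1}$. The crucial observation is that if $\|v\|_\infty < p/2$, then $v$ coincides with the centered representative $\iv{\mu_v \V{a}}_p$; producing the matrix $\Mm_p$ therefore reduces to exhibiting a lattice basis of $L$ whose vectors all have $\ell_1$-norm strictly less than $p/2$.

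First I would apply LLL with parameter $\CLLL$ to $L$, obtaining a basis $v_1,\ldots,v_n$ with $\|v_i\|_2 \le (\CLLL)^{(n-1)/2}\lambda_i(L) \le (\CLLL)^{(n-1)/2}\lambda_n(L)$ for every $i$. Since $\|v_i\|_1 \le \sqrt{n}\,\|v_i\|_2$, it suffices to prove the upper bound $\lambda_n(L) \le p/(2\sqrt{n}\,(\CLLL)^{(n-1)/2})$. To control $\lambda_n(L)$ I would pass to the dual lattice $L^{\ast} = \tfrac{1}{p}\Lambda^{\perp}$, where $\Lambda^{\perp} = \{u \in \Z^n : \langle u,\V{a}\rangle \equiv 0 \Mod{p}\}$, and invoke Banaszczyk's transference theorem $\lambda_n(L)\cdot\lambda_1(L^{\ast}) \le n$. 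This reduces the task to a \emph{lower} bound on $\lambda_1(\Lambda^{\perp})$: no short nonzero integer vector should be orthogonal to $\V{a}$ modulo~$p$, which is precisely where both the primal and dual analyses mentioned in the abstract interact.

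The main obstacle, and the place where the randomness of $\V{a}$ enters, is exactly this lower bound on $\lambda_1(\Lambda^{\perp})$. Because $\sqrt{\LORange}/p$ is superpolynomial in $n$, the reductions $a_i \Mod{p}$ are statistically close to uniform over $\Z/p\Z$, so for any fixed nonzero $u \in \Z^n$ with $\|u\|_\infty < p$ the probability over $\V{a}$ that $\langle u,\V{a}\rangle \equiv 0 \Mod{p}$ is essentially $1/p$. Taking a union bound over the at most $(CR/\sqrt{n})^n$ nonzero integer points in the Euclidean ball of radius $R$ gives $\Pr[\lambda_1(\Lambda^{\perp}) \le R] \le (CR/\sqrt{n})^n/p$. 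Choosing $R = \Theta(p^{1/n}/n^c)$ for a suitable constant $c$ drives the failure probability down to $2^{-\Omega(n\log n)}$ and yields $\lambda_n(L) \le np/R = n^{c+1}\,p^{(n-1)/n}$. Plugging this back, $\|v_i\|_1 \le \sqrt{n}\,(\CLLL)^{(n-1)/2}\,n^{c+1}\,p^{(n-1)/n}$, which falls below $p/2$ exactly when $p = \Omega((\CLLL)^{n^2/2}\cdot 2^{\Theta(n\log n)})$, matching the hypothesis on $p$. Since the LLL vectors $v_i$ then satisfy $\|v_i\|_\infty < p/2$, each equals $\iv{\mu_i \V{a}}_p$ for a unique $\mu_i$, and their linear independence is inherited from being a lattice basis, completing the argument.
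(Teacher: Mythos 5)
There is a genuine gap, and it sits exactly at the step that would have to carry the whole improvement: the claimed LLL guarantee $\norm{\V{v}_i}_2 \le (\CLLL)^{(n-1)/2}\lambda_i(L)$. In the paper's normalization ($\CLLL = 1/\sqrt{\delta-\mu^2} > \sqrt{4/3}$, so that LLL approximates the shortest vector within $(\CLLL)^{n}$), the correct worst-case bound is $\norm{\V{b}_i} \le (\CLLL)^{\,n-1}\lambda_i(L)$, and this exponent is essentially tight; the exponent $(n-1)/2$ belongs to the \emph{Hermite-factor} bound $\norm{\V{b}_1} \le (\CLLL)^{(n-1)/2}\Vol(L)^{1/n}$, which is a statement relative to $\Vol(L)^{1/n}$, not relative to $\lambda_i$, and holds only for the first vector without further work. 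If you insert the correct exponent $n-1$ into your own computation ($\lambda_n(L) \le n^{O(1)}p^{(n-1)/n}$ via transference, then $\norm{\V{v}_i}_1 \le \sqrt{n}\,(\CLLL)^{n-1}\,n^{O(1)}p^{(n-1)/n}$), the condition $\norm{\V{v}_i}_1 < p/2$ forces $p \ge (\CLLL)^{n^2-n}\cdot 2^{\Theta(n\log n)}$, i.e.\ you recover only the Lagarias--Odlyzko density, not the stated $p = \Theta((\CLLL)^{0.5n^2})$. So as written the argument does not prove the theorem, and the loss is not a constant-factor slack that a more careful union bound can absorb.

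What the paper does instead (and what your proposal is missing) is a probabilistic ``sandwich'' on the Gram--Schmidt norms rather than a worst-case comparison with $\lambda_i$: it proves that with probability $1-2^{-\Omega(n\log n)}$ both $\norm{\V{b}_1^\ast} \ge \Vol(\Ll)^{1/n}$ and $\norm{\V{b}_n^\ast} \le \Vol(\Ll)^{1/n}$ (\cref{event1}), and then the chain inequality $\norm{\V{b}_i^\ast} \le \CLLL\norm{\V{b}_{i+1}^\ast}$ yields $\norm{\V{b}_k^\ast} \le (\CLLL)^{(n-1)/2}\Vol(\Ll)^{1/n}$ for \emph{every} $k$ (\cref{lem:main-bound}), after which size reduction transfers the bound from $\V{b}_k^\ast$ to $\V{b}_k$. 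Your dual-side analysis (lower-bounding $\lambda_1$ of $\{\V{u} \in \Z^n : \dotprod{\V{u},\V{a}} \equiv_p 0\}$ by a counting/union-bound argument) is essentially the paper's \cref{prop:event-2}, except the paper needs it only to control $\norm{\V{b}_n^\ast} = \norms{\V{b}_1^\dagger}^{-1}$ directly, with no transference theorem. What you lack entirely is the primal-side event (\cref{prop:event-1}): a counting argument showing that, with overwhelming probability, $\Ll$ has \emph{no} nonzero vector of length below $\Vol(\Ll)^{1/n}$, so that $\norm{\V{b}_1^\ast}$ cannot be unusually small. It is precisely the combination of the primal and dual events, fed through the LLL chain condition, that halves the exponent; neither the worst-case bound $\norm{\V{b}_i} \le (\CLLL)^{n-1}\lambda_i$ nor transference alone can deliver it.
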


Observe, that the algorithm for~\cref{thm:main} follows easily with the matrix
$\Mm_p$ in hand. Namely, consider $\Mm_p \cdot \V{e}^\trans = \V{t}$, where $\V{t}
\coloneqq (\iv{\mu_1 T}_p,\ldots,\iv{\mu_n T}_p)^\trans$. First, because the
$\ell_1$ norm of each row of $\Mm_p$ is bounded by $p/2$, vector $\V{e}$ is also
a solution to $\dotprod{\V{a},\V{e}} = T$.  Second, by the fact that $\Mm_p$ has
full-rank, such a vector $\V{e}$ is \emph{unique}.  Hence, it is enough to
compute the inverse of $\Mm_p$ (which is possible because $\Mm_p$ has full-rank)
and verify that the vector $(\Mm_p)^{-1} \cdot \V{t}$ is in $\{0,1\}^n$
(see~\cref{sec:Coppersmith} for the full proof). 

Now, we briefly sketch the proof of~\cref{thm:multipliers}.  Our algorithm is
inspired by Howgrave-Graham's revisit~\cite{DBLP:conf/ima/Howgrave-Graham97} of
Coppersmith's method for finding short roots of univariate modular
equations~\cite{DBLP:conf/eurocrypt/Coppersmith96}. In a nutshell, we construct
a lattice (very similar to the original lattice of Lagarias and Odlyzko) for
Subset Sum modulo a prime number $p$. Analogously to the original approach of
Lagarias and Odlyzko we bound the length of vectors in this lattice. The caveat
is that we can also bound the lengths of the vectors in the \emph{dual} lattice.
This crucially allows us to double the density of the admissible instances.

There are two main differences between our proposal and
Howgrave-Graham's~algorithm~\cite{DBLP:conf/ima/Howgrave-Graham97} (and the
original~Coppersmith's method~\cite{DBLP:conf/eurocrypt/Coppersmith96}) Firstly,
the number of variables to consider in our case is much larger.  Secondly, we
only need to construct linear polynomials in these variables rather than
considering large degree polynomials. 

\subsection{Extension to Block Basis Reduction}

We would like to comment on the possible improvements of using stronger lattice
reduction.
 Specifically, the only aspect of the Lagarias-Odlyzko algorithm
that relies on the LLL algorithm is the fact that it provides a
$(\CLLL)^n$-approximate solution to the Shortest Vector Problem with
$\CLLL>\sqrt{4/3}$.  Instead of the LLL algorithm, it is natural to apply
block-basis reduction (e.g.,~\cite{schnorr}) with parameter $\CGamma$ arbitrarily
close to $1$. This way, one expects a better approximation
of SVP, consequently improving the range $\Range$. The obvious drawback of this
is the increased running time. 

Our approach, in principle, should allow us to reduce the bitlengths of the Subset Sum instances that
can be solved regardless of the concrete algorithm behind the lattice reduction.
Unfortunately, adapting our approach to work with every basis lattice reduction
requires a separate analysis which is much more technical than for LLL.
Nevertheless, the takeaway message is as follows:

\begin{insight}
    The Modular Arithmetic Approach allows us to solve
    almost all instances of Subset Sum with integers $\Omega(\sqrt{\Range})$ with
    a single call to lattice reduction, whereas the previous approach guaranteed to
    solve only instances with integers $\Omega(\Range)$ while using the same lattice
    reduction algorithm.
\end{insight}

Since the LLL algorithm is the most popular we decided first to present our
algorithms using this algorithm. In~\cref{sec:block}, we demonstrate that our
approach works for the textbook block-reduction presented by Gama and
Nguyen~\cite{gama2008finding,nguyen2010lll}.
\begin{theorem}~\label{thm:block}
    Let $\CGamma > 1$ be a parameter of lattice reduction
    in~\cite{gama2008finding}.
    Let $a_1,\ldots,a_n \in \Z$ be positive integers, selected uniformly at
    random from range $\{1,\ldots,\ceil{\sqrt{\Range}}\}$, where $\Range =
    \CGamma^{n^2} \cdot 2^{\Theta(n\log{n})}$.
    With overwhelming probability, after a single call to a basis reduction algorithm
    in~\cite{gama2008finding} with parameter $\CGamma$, we obtain a polynomial time tester, that for every given
    $T \in \Z$ decides whether
    \begin{displaymath}
        \text{there exist } e_1,\ldots,e_n \in \{0,1\} \text{ such that }
        \sum_{i=1}^n e_i \cdot a_i = T,
    \end{displaymath}
    Our tester and its creation method are both deterministic.
    Their success probability only depends on the random choice of the integers 
    $a_1,\ldots,a_n$ and is at least $1-2^{-\Omega(n \log{n})}$.
\end{theorem}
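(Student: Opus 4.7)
The plan is to mirror the proof of \cref{thm:main-theorem}, replacing LLL by the Gama--Nguyen block reduction wherever it is invoked. Structurally, the algorithm stays identical: build the same modular Subset Sum lattice with a prime $p = \Theta(\CGamma^{0.5 n^2})$, invoke the reduction once to recover multipliers $\mu_1,\ldots,\mu_n$, form the matrix $\Mm_p$ as in \cref{thm:multipliers}, and at query time check whether $(\Mm_p)^{-1}(\iv{\mu_1 T}_p,\ldots,\iv{\mu_n T}_p)^\trans$ is a $\{0,1\}$-vector. The row-norm bound $\norm{(\iv{\mu_i a_1}_p,\ldots,\iv{\mu_i a_n}_p)}_1 < p/2$ continues to rule out modular overflow, so a recovered $\V{e}$ automatically satisfies $\dotprod{\V{a},\V{e}} = T$; and the full-rank property continues to make such $\V{e}$ unique. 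Thus the derivation of the tester from the block-reduction analogue of \cref{thm:multipliers} is formally identical to the derivation of \cref{thm:main} from \cref{thm:multipliers} in \cref{sec:Coppersmith}, and needs no modification.

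First I would isolate the two places where LLL actually enters the proof of \cref{thm:multipliers}: (i) an upper bound on the length of the short vectors of the primal lattice, and (ii) an analogous bound on the short vectors of the dual lattice. Both appear as $\CLLL^{O(n)}\cdot(\text{lattice invariant})$. The Gama--Nguyen reduction with parameter $\CGamma$ delivers a Hermite factor of the same shape, with $\CLLL$ replaced by $\CGamma$. Substituting these bounds in place of the LLL ones transforms $\Range = \CLLL^{n^2}\cdot 2^{\Theta(n\log n)}$ into $\Range = \CGamma^{n^2}\cdot 2^{\Theta(n\log n)}$, matching the statement of \cref{thm:block}.

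Next, I would reuse the probabilistic argument verbatim. The claim that, for $a_1,\ldots,a_n$ uniform in $\{1,\ldots,\ceil{\sqrt{\Range}}\}$, one obtains multipliers whose images $\iv{\mu_i \V{a}}_p$ are both short in $\ell_1$ and linearly independent, uses only Minkowski-style volume estimates and a union bound over lattice points of bounded norm; neither ingredient is tied to the specific reduction algorithm. The failure probability therefore remains $2^{-\Omega(n\log n)}$, and the determinism of the tester and its construction are preserved.

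The main obstacle, and the reason the authors announce a separate and more technical analysis, is the \emph{simultaneous} control of primal and dual short vectors under block reduction. For LLL this follows from the well-known slope bound on the Gram--Schmidt profile of a reduced basis, together with a mirror statement on the reversed dual basis — exactly the ingredient that powers the ``doubling of density'' in \cref{thm:main-theorem}. For the Gama--Nguyen reduction the analogous slope bound is governed jointly by $\CGamma$ and the block size, and one must verify that inverting and reversing a block-reduced basis yields a basis of the dual whose Gram--Schmidt profile satisfies the symmetric slope bound in $\CGamma$. Once this bilateral slope estimate is in hand, the remaining norm-comparisons plug in cleanly, and \cref{thm:block} follows by the same path that led from \cref{thm:multipliers} to \cref{thm:main}.
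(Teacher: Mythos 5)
You are right that the tester-from-$\Mm_p$ derivation and the probabilistic estimates carry over unchanged: the paper itself observes that \cref{thm:main-theorem} and \cref{thm:multipliers} depend on the reduction algorithm only through \cref{lem:main-bound}, and that the event in \cref{event1} is a property of the lattice $\Ll$ (no unusually short primal vector and no unusually short dual vector), not of any particular basis. But that is exactly why your proposal has a genuine gap: the entire new content of \cref{thm:block} is the block-reduction analogue of \cref{lem:main-bound} (the paper's \cref{lem:main-bound-BKZ}), namely that \emph{every} Gram--Schmidt norm of the output basis satisfies $\norm{\V{b}_k^\ast}\le C_{\CGamma}\,\CGamma^{(n-1)/2}\Vol(\Ll)^{1/n}$, and you do not prove it. Asserting that Gama--Nguyen ``delivers a Hermite factor of the same shape with $\CLLL$ replaced by $\CGamma$'' does not suffice: the standard guarantee of~\cite{gama2008finding} controls $\norm{\V{b}_1}$, whereas here one must control the whole profile $\norm{\V{b}_1^\ast},\ldots,\norm{\V{b}_n^\ast}$ (and then pass to $\norm{\V{b}_i}$ via size reduction), and the slope guarantee of slide reduction is \emph{not} a per-index inequality in $\CGamma$: it only compares Gram--Schmidt norms across block boundaries.

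Concretely, the ingredient you defer to ``one must verify\ldots'' is also mis-identified. No slope bound on the reversed dual of the block-reduced basis is needed; the dual lattice enters only through \cref{event1} via $\norm{\V{b}_n^\ast}=\norms{\V{b}_1^\dagger}^{-1}$, which you already noted is algorithm-independent. What the paper actually uses are two facts about the output of~\cite{gama2008finding}: (i) the primal--dual inequality for block-Mordell-reduced bases, $\norm{\V{b}_j^\ast}/\norm{\V{b}_{j+w}^\ast}\le\CGamma^{w}$ for block boundaries $j\equiv_w 1$, and (ii) the output basis is also LLL-reduced, which controls the profile \emph{inside} each block of width $w$ at the cost of a factor $(\CLLL)^{w}$ per endpoint. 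Combining the two gives $\norm{\V{b}_i^\ast}\ge(\CLLL)^{-2w}\,\CGamma^{j-i}\,\norm{\V{b}_j^\ast}$ for all $j<i$, and rerunning the interpolation argument of \cref{lem:main-bound} with this inequality together with \cref{event1} yields the desired bound with an extra constant $C_{\CGamma}=(\CLLL)^{4w}$ depending only on $\CGamma$ (since $w$ is determined by $\CGamma$ via the Hermite constant $\gamma_w$). Without carrying out this step, the claimed range $\Range=\CGamma^{n^2}\cdot2^{\Theta(n\log n)}$ is not established; with it, your outline does coincide with the paper's proof.
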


In comparison, the standard analysis of Lagarias-Odlyzko algorithm requires the
range of integers to be of order $\Range = \CGamma^{n^2} \cdot 2^{\Theta(n\log{n})}$ when the lattice reduction
with parameter $\CGamma > 1$ is used.

\subsection{Related Work}

Subset Sum is an NP-hard problem, closely related to knapsack. The currently
fastest algorithm to solve the problem exactly in the worst-case settings runs
in $\Os(2^{n/2})$ time~\cite{horowitz-sahni}.  Schroeppel and
Shamir~\cite{schroeppel} show that Subset Sum admits a time-space tradeoff,
i.e., an algorithm using $\mathcal{S}$ space and $2^{n}/\mathcal{S}^2$ time for
any $\mathcal{S} \le \Os(2^{n/4})$. This tradeoff was improved by
\cite{subsetsum-tradeoff} for almost all tradeoff parameters (see also
\cite{dinur,NederlofW21}).  In the average-case setting, when $d = \Theta(1)$,
Howgrave{-}Graham and Joux~\cite{generic-knapsack1} showed  $\Os(2^{0.337n})$
time and $\Os(2^{0.256n})$ space algorithm for Subset Sum. This was subsequently
improved by Becker, Coron and Joux~\cite{generic-knapsack2} to $\Os(2^{0.291n})$
time and space.

The exact running time of the Subset Sum was also analysed in the parameterized
setting~\cite{stacs2015,stacs2016}, quantum
regime~\cite{DBLP:conf/esa/AllcockHJKS22,helm2018subset}, the polynomial space
regime~\cite{polyspace-stoc2017}. From the perspective of pseudopolynomial
algorithms, Subset Sum has also been the subject of recent stimulative
research~\cite{bringmann2017near,koiliaris2019faster,abboud2022seth,icalp21,pseudopolynomial-polyspace}.

\section{Preliminaries}\label{sec:prelim}

Through the paper, we let $[n] \coloneqq \{1,\ldots,n\}$.  We use $\equiv_p$ to
denote equivalence modulo $p$, i.e., $a \equiv_p b$ iff there exists $k \in
\mathbb{Z}$ such that $a = k\cdot p + b$.  For integers $a,b$ we let $\iv{a}_b$
be the unique value in set $\{-\ceil{b/2}+1,\ldots,\floor{b/2}\}$ such that
$\iv{a}_b \equiv_b a$. $\Os(\cdot)$ notation hides factors polynomial in $n$.

We consider $\real^n$ with the usual Euclidean topology. We use bold letters to
denote vectors. The inner product of two vectors $\V{x} = (x_i)_{i=1}^n$ and
$\V{y} = (y_i)_{i=1}^n$ is denoted by $\dotprod{\V{x},\V{y}} = \sum_{i=1}^n
x_i \cdot y_i$. We denote the $\ell_p$ norm as $\norm{\V{x}}_p \coloneqq
(\sum_{i=1}^n (x_i)^p)^{1/p}$. Often, we omit the subscript $p$ which means that
we consider the Euclidean case with $p=2$.  We denote the $d$-dimensional Euclidean ball of
radius $R$ centered at~$\V{0}$ by $\Ball_d(R)$. We use the Stirling formula to approximate the
volume of the $d$-dimensional ball, namely:
\begin{displaymath}
    \Vol(\Ball_d(R)) \sim \frac{1}{\sqrt{d\pi}} \left(\frac{2\pi
    e}{d}\right)^{d/2} \cdot R^d = 2^{-\Theta(d \log(d))} \cdot R^d.
\end{displaymath}

\subsection{Background on Lattices}

A \emph{lattice} in $\real^n$ is a discrete subgroup of
$(\real^n,+)$. This implies that a lattice is a non-empty set $L \subseteq
\real^n$ such that for all $\V{x},\V{y} \in L$ it holds that $\V{x}-\V{y} \in L$.
Any subgroup of $(\Z^n,+)$ is a lattice, and it is called an 
\emph{integral lattice}.

Let $\V{b}_1,\ldots,\V{b}_m$ be vectors in $\real^n$. Let
$\Ll(\V{b}_1,\ldots,\V{b}_m)$ be the set of integral combinations of these
vectors, namely:
\begin{displaymath}
    \Ll(\V{b}_1,\ldots,\V{b}_m) \coloneqq \left\{ \sum_{i=1}^m \ell_i \cdot \V{b}_i
    \text{ such that } \ell_1,\ldots,\ell_m \in \Z \right\}.
\end{displaymath}
Then, $\Ll(\V{b}_1,\ldots,\V{b}_m)$ is a lattice, specifically the lattice spanned
by $\V{b}_1,\ldots,\V{b}_m.$ Furthermore, when the vectors  $\V{b}_1,\ldots,\V{b}_m$
are linearly independent (over $\real$), we say that they form a basis
of $\Ll(\V{b}_1,\ldots,\V{b}_m).$ 
In this case, every vector $\V{v} \in L$ uniquely
decomposes as an integral combination of the basis vectors , i.e., for every $\V{v} \in L$
there exists unique $\alpha_1,\ldots,\alpha_m \in \Z$ such that $\V{v} =
\sum_{i=1}^m \alpha_i \cdot \V{b}_i$~\cite[Definition 5,
p. 26]{nguyen2010lll}. It is well-known that every lattice admits
basis. The cardinality of any basis is an invariant of the lattice and
is called its \emph{dimension} (or sometimes referred to as its rank). This
quantity, denoted by $\mathrm{dim}(L)$ is equal to the
dimension of its linear span (denoted $\mathrm{span}(L)$) as a vector
space over $\real.$

We say that a lattice $L \subseteq \real^n$ has full rank when
$n = \mathrm{dim}(L)$. Since any lattice of $\real^n$ admits some
basis, we assume w.l.o.g. that every lattice we consider is given in
the form $\Ll(\V{b}_1,\ldots,\V{b}_m)$ for some $\V{b}_i$'s in
$\real^n$ where $m = \mathrm{dim}(L)$.

Thus, $\V{b}_1,\ldots,\V{b}_m \in \real^n$ are linearly independent vectors. Their
Gram-Schmidt orthogonalization (GSO) is the orthogonal family
$(\V{b}^\ast_1,\ldots,\V{b}^\ast_m)$ defined inductively as follows: 

\begin{align*}
 & \V{b}_1^\ast = \V{b}_1 \quad \text{and}\\
 &  \V{b}_i^\ast = \V{b}_i - \sum_{j=1}^{i-1} \mu_{i,j} \V{b}_j^\ast \; \text{ where
    } \;
    \mu_{i,j} \coloneqq \frac{\dotprod{\V{b}_i , \V{b}_j^\ast}}{\norm{\V{b}_j^\ast}^2} \;\text{
    for all }\; 1 \le j < i \le m.
\end{align*}

We can define the volume of the lattice $L$ spanned by $\V{b}_1,\ldots,\V{b}_m$ as
\begin{displaymath}
    \Vol(L) = \prod_{i=1}^m \norm{\V{b}_i^\ast},
\end{displaymath}
and remark that it is a lattice invariant and does not depend on the
choice of basis. 

Other classical invariants of a lattice are its successive minima. By
definition, the $i$th successive
minimum, denoted as $\lambda_i(L)$ is the smallest real number for which there exist $i$
linearly independent vectors in $L$ each of length at most $\lambda_i(L)$. In
particular, $\lambda_1(L)$ is the length of the shortest nonzero vector in $L$. Note
that $\lambda_1(L) \le \lambda_2(L) \le \ldots \le \lambda_m(L)$.

\subsection{Dual Lattice}

For a lattice $L \subset \real^n$ the \emph{dual} lattice of $L$ is defined as:
\begin{displaymath}
    L^\dagger \coloneqq \{ \V{y} \in \mathrm{span}(L) \text{ such that }
    \angles{\V{x},\V{y}} \in \Z \text{ for all } \V{x} \in L \}
    .
  \end{displaymath}

When $L$ is an $m$-dimensional lattice of $\mathbb{R}^n$ then $L^\dagger$ is
also an $m$-dimensional lattice of $\mathbb{R}^n$.
\iffalse
More precisely, if $L$ is
spanned by the columns of the basis matrix $B$, then $L^\dagger$ is spanned by the
columns of:
\[B^\dagger = B \left(B^\trans B\right)^{-1}
\]
Furthermore, it holds that $\Vol(L) \cdot \Vol(L^\dagger) = 1$. Finally, if we
reverse the order of the columns of $B^\dagger$, we obtain the so-called dual
basis $D^\dagger$ of $B$. The GSO of dual basis is related by the following
relations:
\begin{displaymath}
    \norm{\V{b}_i^\ast} \cdot \norm{\V{d}_{m-i+1}^\ast} = 1 \quad
    \mbox{for every}\  i \in [m].
  \end{displaymath}

  In order to make explicit, that the order of the basis vectors is reversed, we
write:
\[
    D^\dagger = \Rev\left(B \left(B^\trans B\right)^{-1}\right).
\]

\else
More precisely, if $L$ is
spanned by the rows of the basis matrix $B$, then $L^\dagger$ is spanned by the
rows of:
\[B^\dagger = \left(B B^\trans\right)^{-1}B
\]
Furthermore, it holds that $\Vol(L) \cdot \Vol(L^\dagger) = 1$. Finally, if we
reverse the order of the rows of $B^\dagger$, we obtain the so-called dual
basis $D^\dagger$ of $B$. The GSO of dual basis is related to the GSO
of the primal basis by the following relations:
\begin{displaymath}
    \norm{\V{b}_i^\ast} \cdot \norm{\V{d}_{m-i+1}^\ast} = 1 \quad
    \mbox{for every}\  i \in [m].
  \end{displaymath}
In order to make explicit that the order of the (row) basis vectors is reversed, we
write:
\[
    D^\dagger = \Rev\left(\left(B B^\trans\right)^{-1} B\right).
\]

\fi

\subsection{The LLL algorithm and its property}
Lenstra, Lenstra and Lov{\'a}sz~\cite{lll} presented a polynomial time procedure
that, given an arbitrary basis of a lattice computes a so-called
reduced basis with better properties. Extensions of LLL can also
compute a reduced basis from an arbitrary generating family.

The main parameter of LLL is a real $\delta \in (0.25,1)$.  A secondary parameter $\mu\in [0.5,1)$ is also useful
when considering implementations of LLL with finite precision on
orthogonalised vectors. The initial LLL with exact rational arithmetic~\cite{lll}
uses $\mu=1/2$. Let $\V{b}_1,\ldots,\V{b}_m \in \real^n$ be basis of lattice $L$ and 
$\V{b}_1^\ast,\ldots,\V{b}^\ast_m \in \real^n$ be its Gram-Schmidt orthogonalized basis of
$L$. We say that $(\V{b}_1,\ldots,\V{b}_m)$ is \emph{$(\delta,\mu)$-LLL
  reduced} if and only if the following conditions are satisfied:
\begin{align*}
    \text{Size Reduction: }\quad & |\mu_{i,k}| \le \mu & \text{ for every } 1 \le k < i \le m\\
    \text{Lov\'asz Condition: } \quad &
\delta \norm{\V{b}_i^\ast}^2  \le \norm{\V{b}_{i+1}^\ast}^2 + \mu_{i+1,i}^2
    \norm{\V{b}_i^\ast}^2 & \text{ for every } 1 \le i < m
\end{align*}

Recall that $\mu_{i,k} = \frac{\dotprod{\V{b}_i ,
\V{b}_k^\ast}}{\dotprod{\V{b}_k^\ast , \V{b}_k^\ast}}$. Both of these conditions
yield an \emph{implied parameter} $\CLLL \coloneqq \frac{1}{\sqrt{\delta -
\mu^2}}$, such that:

\begin{align}
    \norm{\V{b}_i^\ast} \le \CLLL \cdot \norm{\V{b}_{i+1}^\ast} \text{ for
    every } 1 \le i < m \label{ineq:LLL3}
\end{align}

\cref{ineq:LLL3} is the strongest for $\CLLL = \sqrt{4/3}+\eps$ for some $\eps
> 0$. This is achieved by taking $\mu$ sufficiently close to $1/2$ and $\delta$
sufficiently close to $1$. In the sequel, we focus on the parameter $\CLLL$
and express our results in terms of it. Note, that the same inequality holds
for the dual lattice.

The running time of LLL algorithm~\cite{lll} is $\Oh(m^5 n \cdot 
\text{polylog}(B))$, where $B$ is the Euclidean length of the longest
basis)~\cite{lll,nguyen2010lll} (see~\cite{10.1007/978-3-031-38548-3_1} for
recent development).

\section{State-of-the-art: Lagarias-Odlyzko Algorithm}

Assume that $\V{a} = (a_1,\ldots,a_n)$ is a vector in $\mathbb{Z}^n$ where each $a_i$ is chosen independently uniformly at
random from $\{1,\ldots,\Range\}$ and $\V{e} = (e_1,\ldots,e_n) \in \{0,1\}^n$ is
chosen independently at random. Set $T = \dotprod{\V{a},\V{e}}$. Then, clearly
$\V{e}$ is a solution to the equation:
\begin{displaymath}
    \sum_{i=1}^n a_i x_i = T \text{ such that } x_i \in \{0,1\} \text { for every } i \in \{1,\ldots,n\}.
\end{displaymath}
Here, we recall the argument of Lagarias and Odlyzko who show that if $\Range \ge
\Omega((\CLLL)^{n^2 + o(n^2)})$ and $\CLLL > \sqrt{4/3}$, then the LLL
algorithm is sufficient to solve an instance $a_1,\ldots,a_n$ with target $T$ of
Subset Sum in polynomial time and with probability $\ge 1-2^{-\Omega(n\log{n})}$.

Note that $T \le \sum_{i=1}^n a_i$, and we can assume that
$\norm{\V{e}}_1 \le \frac{n}{2}$ as otherwise, we can 
consider the same instance with target $\left(\sum_{i=1}^n a_i -
T\right)$. Let $K \coloneqq \ceil{n(\CLLL)^n}$ be a sufficiently large integer. Lagarias
and Odlyzko~\cite{LO} construct the following row-generated lattice:

\begin{displaymath}
L \coloneqq 
\begin{pmatrix}
    1 & 0 & \cdots & 0& a_1 \cdot K \\
    0 & 1 & \cdots & 0& a_2  \cdot K \\
    \vdots & \vdots & \ddots & \vdots & \vdots \\
    0 & 0 & \cdots & 1 & a_n \cdot K \\
    0 & 0 & \cdots & 0 & T \cdot K \\
\end{pmatrix}.
\end{displaymath}
Let $\bar{\V{e}} = (e_1,\ldots,e_n,0)$ such that $\sum_{i=1}^n e_i a_i = T$.
Observe that $\bar{\V{e}} \in L$ and $\norm{\bar{\V{e}}} \le \sqrt{n/2}$.
Therefore, the vector $\bar{\V{e}}$ is a \emph{short vector} of $L$.

The idea of Lagarias and Odlyzko is to use the algorithm of Lenstra, Lenstra
and Lov\'asz~\cite{lll}. Let $\V{v}$ be a minimum length non-zero vector in
$L$. The LLL algorithm guarantees finding a nonzero vector $\V{x} \in L$ with:
\begin{displaymath}
    \norm{\V{x}} \le (\CLLL)^{n} \norm{\V{v}}.
\end{displaymath}
Here, $\CLLL$ is an implied parameter of the LLL algorithm, and we can guarantee
that when $\CLLL > \sqrt{4/3}$, the LLL algorithm runs in polynomial time
(see~\cref{sec:prelim}). Note that the vector $\V{x}$ found by the LLL satisfies
\begin{displaymath}
    \norm{\V{x}} \le (\CLLL)^n \norm{\bar{\V{e}}} \le (\CLLL)^n\sqrt{n/2} < K
\end{displaymath}
The strategy is, hence, to run the LLL algorithm and hope that the vector $\V{x}$ returned by
LLL is either $\bar{\V{e}}$ or $-\bar{\V{e}}$. There is of course the possibility
that there are other spurious vectors of length less than $K$. The analysis by
Lagarias and Odlyzko demonstrates that this is, however, highly unlikely.
More precisely, they show the following statement:
\begin{claim}[cf.,\cite{Frieze}] \label{claim:LO}
    $
    \prob{\text{exists } \V{x} \in L \setminus \{ k \cdot \V{e} \mid k \in
        \mathbb{Z} \} \text{ such
            that } \norm{\V{x}} < K
        } \le \frac{(2 K+1)^{n+1}}{\Range}.
    $
\end{claim}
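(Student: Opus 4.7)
I would prove the claim by a union bound over integer coefficient vectors of candidate short lattice vectors. Any $\V{x} \in L$ is uniquely of the form $\sum_{i=1}^{n+1} c_i \V{b}_i$ where $\V{b}_1,\ldots,\V{b}_{n+1}$ are the rows of the displayed basis matrix; this yields $\V{x} = (c_1, \ldots, c_n, K \cdot w)$ with $w = \sum_{i=1}^n a_i c_i + c_{n+1} T$. The constraint $\norm{\V{x}} < K$ forces $|c_i| < K$ for every $i \le n$ (these appear directly as the first $n$ coordinates of $\V{x}$), and, since $Kw$ is an integer multiple of $K$, it also forces $w = 0$. Substituting $T = \sum_j a_j e_j$, the equation $w = 0$ rewrites as $\sum_{i=1}^n a_i (c_i + c_{n+1} e_i) = 0$; a short calculation shows that $\V{x} \in \Z \bar{\V{e}}$ precisely when the vector $\V{y} := (c_1 + c_{n+1} e_1, \ldots, c_n + c_{n+1} e_n)$ is the zero vector, i.e., when the tuple $(c_1, \ldots, c_{n+1})$ is proportional to $(\V{e}, -1)$.

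Next, I would bound the per-tuple probability. For a fixed $(c_1, \ldots, c_{n+1}) \in \Z^{n+1}$ with $\V{y} \neq \V{0}$, pick any index $j$ with $y_j \neq 0$; the linear equation $\dotprod{\V{a}, \V{y}} = 0$ then uniquely determines $a_j$ in terms of the remaining $a_i$'s, so the probability over the random $\V{a}$ that this equation holds is at most $1/\Range$, as $a_j$ is drawn uniformly from $\Range$ values. Summing over all tuples $(c_1, \ldots, c_{n+1}) \in \{-K, \ldots, K\}^{n+1}$ --- of which there are $(2K+1)^{n+1}$ --- then yields the claimed inequality.

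The main obstacle is that only the first $n$ coefficients are directly controlled by $\norm{\V{x}} < K$, whereas the last coefficient $c_{n+1}$ does not appear as a coordinate of $\V{x}$ and must be confined to $[-K, K]$ separately. I would address this by bounding $|c_{n+1}| = |\sum a_i c_i|/|T|$ using standard concentration on $T = \sum_i a_i e_i$: with overwhelming probability $|T|$ is of order $n\Range$, which combined with $|\sum a_i c_i| \leq n K \Range$ yields $|c_{n+1}| = O(K)$, and the resulting constant factor can be absorbed by slightly enlarging the enumeration range (at the cost of a multiplicative constant in the final bound). An alternative is to replace $\V{x}$ by a shifted representative $\V{x} + k \bar{\V{e}}$ for a suitable integer $k$, which alters the last coefficient of the expansion without changing the equation $w = 0$ or the non-triviality of $\V{x}$ modulo $\Z \bar{\V{e}}$.
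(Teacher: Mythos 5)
Your skeleton coincides with the paper's proof (which follows Frieze): expand $\V{x}$ over the rows of the basis, note that $\norm{\V{x}}<K$ forces the last coordinate to vanish and $|c_i|<K$ for $i\le n$, pass to $\V{y}=(c_1+c_{n+1}e_1,\ldots,c_n+c_{n+1}e_n)$ (this is exactly the paper's $\V{z}$, with $\ell=-c_{n+1}$), observe that $\V{y}\neq\V{0}$ precisely off the line $\Z\bar{\V{e}}$, bound $\prob{\dotprod{\V{a},\V{y}}=0}\le 1/\Range$ for each fixed coefficient tuple by solving for one coordinate of $\V{a}$, and finish with a union bound. The only point of divergence is the one you flag yourself: how to confine $c_{n+1}$ to a range of size $O(K)$. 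The paper does this deterministically, counting pairs $(\V{w},\ell)$ with $|\ell|\le 2K$; the justification behind that restriction (cf.\ Frieze) is that one may arrange $T$ to be comparable to $\sum_i a_i$ (run the argument on the complementary target $\sum_i a_i-T$ if necessary), after which $|c_{n+1}|=|\sum_i a_ic_i|/T\le 2K$ with no appeal to randomness.

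Your probabilistic substitute does not prove the claim as stated. Conditioning on ``$T=\Theta(n\Range)$'' introduces an additive error $\prob{T \text{ atypical}}$, which by Hoeffding-type bounds is only $2^{-\Omega(n)}$ (and it also needs $\norm{\V{e}}_1=\Theta(n)$, another event to account for). The claimed right-hand side $(2K+1)^{n+1}/\Range$ is $2^{-\Omega(n\log n)}$ in the regime where the claim is invoked, so your additive $2^{-\Omega(n)}$ term dominates it and degrades the final success probability from $1-2^{-\Omega(n\log n)}$ to $1-2^{-\Omega(n)}$; moreover the stated inequality has no additive slack and no room for the constant-factor enlargement of the enumeration box that you propose to absorb. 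Your alternative fix --- shifting $\V{x}$ by $k\bar{\V{e}}$ --- fails as described: since $\bar{\V{e}}=\sum_i e_i\V{b}_i-\V{b}_{n+1}$, the shift changes the first $n$ coordinates by $ke_i$ as well, so the shifted representative is no longer confined to a box of side $O(K)$, and a union bound over unrestricted $\V{y}$ is hopeless (nonzero integer vectors with $\dotprod{\V{a},\V{y}}=0$ always exist). Replacing both fixes by the deterministic normalization of $T$ above turns your argument into the paper's proof.
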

Note that this is $2^{-\Omega(n\log{n})}$ when $\Range \ge \Omega((\CLLL)^{n^2
+o(n^2)})$. Therefore, by considering the opposite event, we can guarantee
that the vector $\bar{\V{e}}$ is found with a probability at least $1 - 2^{-\Omega(n \log{n})}$.  We
include the proof of~\cref{claim:LO} due to~\cite{Frieze} as it serves as an
introduction to our technique.
\begin{proof}[Proof of~\cref{claim:LO}]
    Let $\V{w} = (w_1,\ldots,w_{n+1}) \in L$. Observe that if $w_{n+1} \neq 0$
    then already $\norm{\V{w}} \ge K$. Hence, it must hold that $w_{n+1} = 0$. Let
    \begin{displaymath}
        \mathcal{W} \coloneqq \{ \V{w} \in L \text{ such that } \norm{\V{w}} <
            K \text{ and } w_{n+1} = 0 \text{ and } \V{w} \neq k \cdot
        \bar{\V{e}} \text{ for any } k \in \mathbb{Z} \}.
    \end{displaymath}
    Now, it suffices to bound the probability that $\mathcal{W}$ is empty.
    If, however, $\mathcal{W}$ is nonempty, there exists $\V{w} =
    (w_1,\ldots,w_n,0) \in \mathbb{Z}^{n+1}$
    and $\ell \in \mathbb{Z}$ that satisfy:
    \begin{align}
        \label{eq:A0empty}
        \sum_{i=1}^n a_i \cdot w_i = \ell \cdot T.
    \end{align}
    Let us now fix this vector $\V{w}$ and integer $\ell$. Let $\V{z} = (z_1,\ldots,z_n) \in \mathbb{Z}^n$ be such
    that $z_i = w_i - e_i \cdot \ell$.
    Observe that $\dotprod{\V{a},\V{z}} = 0$. Without loss of generality, we
    can assume that $z_1 \neq 0$ and we let $Z \coloneqq -(\sum_{i=2}^n
    a_i z_i/z_1)$. Hence,
    \begin{displaymath}
        \prob{\dotprod{\V{a},\V{z}}=0} = \prob{a_1 = Z} = \sum_{i=1}^\Range
        \prob{a_1 = i \;|\; Z =i} \cdot \prob{Z = i}.
    \end{displaymath}
    As $a_1$ and $Z$ are independent, this is bounded by
    \begin{displaymath}
        \sum_{i=1}^\Range \frac{1}{\Range} \cdot \prob{Z = i} \le
        \frac{1}{\Range}.
    \end{displaymath}
    Therefore, a fixed $\V{w}$ and $\ell$ satisfy Equation~\eqref{eq:A0empty}
    with probability at most $1/\Range$. Note that the number of $\V{w} \in
    \mathbb{Z}^{n+1}$ and $\ell \in \mathbb{Z}$ such that $\norm{\V{w}} < K$,
    $w_{n+1} = 0$ and $|\ell| \le 2K$ is
    at most $(2K+1)^{n+1}$ and the proof concludes.
\end{proof}

Note, that the proof of~\cref{claim:LO} uses lattice reduction as a blackbox.
If one were to use lattice reduction with parameter $\CGamma > 1$, then the
range guaranteed by~\cref{claim:LO} would be $\Range = \Theta((\CGamma)^{n^2 +
o(n^2)})$. In the next sections, we will show a method to improve the admissible
range down to $\Range = \Theta((\CGamma)^{0.5 n^2 + o(n^2)})$ that uses more properties of lattice
reduction. In~\cref{sec:Coppersmith} we analyse it with the standard LLL. Then
in section~\cref{sec:block} we analyse it with a textbook block lattice
reduction of Gama and Nguyen~\cite{gama2008finding}.

\section{Modular arithmetic approach}\label{sec:Coppersmith}

In this section, we focus on the proof of the following statement.

\multipliers

Before we do so, let us show how to use~\cref{thm:multipliers} to solve the Subset
Sum formally.

\begin{proof}[Proof of Theorem~\ref{thm:main-theorem} assuming~\cref{thm:multipliers}]
    We start by describing the algorithm. First, we use~\cref{thm:multipliers}
    to compute the integers $\mu_1,\ldots,\mu_n,p$. This gives us the matrix
    $\Mm_p$. We know that this matrix has full rank so we
    can compute its inverse with the Gaussian elimination algorithm. This concludes the
    description of the preprocessing phase. 

    Now, given a target $T$ we compute the vector $\V{t}_p \coloneqq (\iv{\mu_1
    T}_p,\ldots,\iv{\mu_n T}_p)^\trans$ and compute a candidate solution
	$\V{e}^\trans = \Mm_p^{-1} \cdot \V{t}_p$. Finally, we check that $\V{e}$ is indeed correct. Namely, if $\sum_{i=1}^n e_{i} a_j = T$ and $\V{e} \in \{0,1\}^n$ we return
    $\V{e}$ and $\bot$ otherwise. This concludes the description of the algorithm.
    Clearly, the algorithm runs in polynomial time. Moreover, a query takes only
    $\Oh(n)$ arithmetic operations (on numbers from the range $[R]$). The
    success probability of the algorithm comes exclusively from a single
    application of~\cref{thm:multipliers}. Therefore, it remains to argue about
    the correctness.

    Let $e_1,\ldots,e_n \in \{0,1\}^n$ be a solution to  $\sum_{i=1}^n
    a_i e_i = T$ and assume that our algorithm returns $\bot$.
    Because the norm $\norm{\iv{\mu_i \V{a}}_p}_1$ is
    bounded by $p/2$ it holds that:
    
\begin{displaymath}
\begin{pmatrix}
    \iv{\mu_1 a_1}_p & \iv{\mu_1 a_2}_p & \cdots & \iv{\mu_1 a_n}_p\\
    \iv{\mu_2 a_1}_p & \iv{\mu_2 a_2}_p & \cdots & \iv{\mu_2 a_n}_p\\
    \vdots & & & \vdots \\
    \iv{\mu_n a_1}_p & \iv{\mu_n a_2}_p & \cdots & \iv{\mu_n a_n}_p\\
\end{pmatrix} 
\begin{pmatrix}
    e_1 \\
    e_2 \\
    \vdots\\
    e_n
\end{pmatrix} 
= 
\begin{pmatrix}
    \iv{\mu_1 T}_p \\
    \iv{\mu_2 T}_p \\
    \vdots \\
    \iv{\mu_n T}_p \\
\end{pmatrix} 
\end{displaymath}
    Since the matrix $\Mm_p$ has full rank, there exists a unique solution to the above linear system. Hence
    $(e_1,\ldots,e_n)^\trans = \Mm_p^{-1} \cdot \V{t}_p$. Finally,
    note that we only return this solution if all its coordinates are in
    $\{0,1\}^n$, thus ensuring an incorrect solution is never returned.
\end{proof}

From now on we will focus on the proof of~\cref{thm:multipliers}.

\subsection{Generating Family}

Let $p \in \mathbb{N}$ be the prime fixed in~\cref{thm:multipliers}.  In
particular, $p$ is odd. Because the numbers $a_1,\ldots,a_n$ are generated at
random in the interval $\{1,\ldots,\sqrt{\LORange}\}$ which has length much
larger than $p$, the modular reductions $a_i \pmod{p}$ are very close to
uniform modulo $p$. For simplicity of the analysis we want to assume that they
are uniform. This can be achieved by using the rejection sampling to discard any event
when at least one $a_i$ is greater than $\sigma_p$, where $\sigma_p$ is the
largest multiple of $p$ such that $\sigma_p < \sqrt{\LORange}$. This is valid,
because the probability that specific $a_i$ is discarded is:

\begin{displaymath}
	\Prb\left[a_i > \sigma_p\right] \le \frac{|\sqrt{\LORange} - \sigma_p|}{\sqrt{\LORange}} \le 2^{-\Omega(n \log{n})}.
\end{displaymath}

Therefore, by the union bound, probability that we do not discard any of
$a_1,\ldots,a_n$ is at least $1-2^{-\Omega(n \log{n})}$. Similarly, we can
assume that $a_1$ is not a multiple of $p$. Hence, there exists an integer
$a_1^{-1}$ such that $a_1 \cdot a_1^{-1} \equiv_p 1$. 

Consider the lattice generated by the rows of the following matrix:
\begin{equation}
     \Ll =
 \begin{pmatrix}
     a_1 & a_2 & a_3 & \cdots & a_n \\
     p & 0 & 0 & \cdots & 0 \\
     0 & p & 0 & & 0 \\
     0 & 0 & p & & 0 \\
     \vdots & \vdots &\vdots & \ddots  & \vdots \\
     0 & 0 & 0 &\ldots & p 
 \end{pmatrix} \in \mathbb{Z}^{(n+1) \times n}
 \label{eq:ll-definition}
\end{equation}

Now, let us elaborate on the connection between matrix $\Ll$ and any vector
$(\iv{\mu a_1}_p,\ldots,\iv{\mu a_n}_p)$ from the statement
of~\cref{thm:multipliers}.

\begin{observation}\label{obs:mu}
    For every integer $\mu \in \Z$ it holds that:

    \begin{displaymath}
        (\iv{\mu a_1}_p,\ldots,\iv{\mu a_n}_p) \in \Ll.
    \end{displaymath}

    Conversely, for every vector $\V{v} \in \Ll \cap
    \{-\floor{p/2},\ldots,\floor{p/2}\}^n$, there exists $\mu' \in \Z$ such
    that:
    \begin{displaymath}
        \V{v} = (\iv{\mu' a_1}_p,\ldots,\iv{\mu' a_n}_p).
    \end{displaymath}

    Moreover, $\mu'$ is unique modulo $p$ and its exact value can be determined with a constant number of
    arithmetic operations.
\end{observation}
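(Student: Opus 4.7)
The plan is to unpack the lattice $\Ll$ in terms of its generating rows and then match the resulting congruence condition against the canonical representative $\iv{\cdot}_p$. By definition, a vector lies in $\Ll$ if and only if it is an integer combination of the rows of the matrix in~\eqref{eq:ll-definition}, i.e.\ it has the shape $\mu \cdot (a_1,\ldots,a_n) + p \cdot (k_1,\ldots,k_n)$ for some $\mu,k_1,\ldots,k_n \in \Z$. Equivalently, $\V{v} = (v_1,\ldots,v_n)$ lies in $\Ll$ if and only if there is an integer $\mu$ with $v_i \equiv \mu a_i \pmod p$ for all $i \in [n]$. This observation will do most of the work.

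For the first direction, fix $\mu \in \Z$ and set $v_i \coloneqq \iv{\mu a_i}_p$. By definition of $\iv{\cdot}_p$, each $v_i$ differs from $\mu a_i$ by a multiple of $p$, so writing $v_i = \mu a_i - k_i p$ expresses $(v_1,\ldots,v_n)$ as $\mu$ times the first row of $\Ll$ minus $k_i$ times the $(i{+}1)$st row. Hence $(\iv{\mu a_1}_p,\ldots,\iv{\mu a_n}_p) \in \Ll$.

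For the converse, start from an arbitrary $\V{v} \in \Ll$. By the first paragraph, $\V{v} = \mu' \V{a} + p\V{k}$ for some $\mu'\in\Z$ and $\V{k}\in\Z^n$, so $v_i \equiv \mu' a_i \pmod p$ for every $i$. Since $p$ is odd, the set $\{-\lfloor p/2\rfloor,\ldots,\lfloor p/2\rfloor\}$ coincides with the canonical range $\{-\lceil p/2\rceil+1,\ldots,\lfloor p/2\rfloor\}$ used in the definition of $\iv{\cdot}_p$; assuming $\V{v}$ lies in this box, each $v_i$ is the \emph{unique} representative of $\mu' a_i$ modulo $p$ in that range, so $v_i = \iv{\mu' a_i}_p$, as required.

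It remains to address the uniqueness of $\mu'$ modulo $p$ and its efficient computation. If $\mu'$ and $\mu''$ both work, then $(\mu' - \mu'')a_i \equiv 0 \pmod p$ for every $i$; applied to $i=1$, together with the fact (established just before the observation) that $a_1$ is coprime to $p$ and thus admits an inverse $a_1^{-1}$ modulo $p$, this forces $\mu' \equiv \mu'' \pmod p$. The same inversion gives an explicit recipe: $\mu' \equiv a_1^{-1} \cdot v_1 \pmod p$, which is a single modular multiplication once $a_1^{-1} \bmod p$ is precomputed. There is no genuine obstacle in this proof; the only subtlety to flag is the parity of $p$ to align the two conventions for the centered residue interval.
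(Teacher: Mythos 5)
Your proposal is correct and follows essentially the same route as the paper: decompose any lattice element as $\mu\cdot(a_1,\ldots,a_n)+p\cdot(k_1,\ldots,k_n)$, use the uniqueness of the centered residue in the box to identify the coordinates with $\iv{\mu' a_i}_p$, and recover $\mu'$ as $v_1\cdot a_1^{-1}\bmod p$ using the invertibility of $a_1$ established just before the observation. The explicit uniqueness-modulo-$p$ argument and the parity remark are minor elaborations of what the paper leaves implicit.
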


\begin{proof}
    For the first property observe that trivially $\mu \cdot (a_1,\ldots,a_n) \in
    \Ll$. Let $\V{e}_i$ be the vector with $1$ at the $i$th coordinate and $0$
    at the remaining coordinates. Note that by definition $(p \cdot \V{e}_i) \in
    \Ll$ for every $i \in [n]$. Then we construct the desired vector by

    \begin{displaymath}
        (\iv{\mu a_1}_p,\ldots,\iv{\mu a_n}_p) = \mu \cdot
        (a_1,\ldots,a_n) -
        \sum_{i=1}^n \floor{\frac{\mu a_i}{p}} \cdot \V{e}_i \in \Ll.
    \end{displaymath}

    For the converse property, by definition any vector $\V{v} \in \Ll$ is
    represented as:

    \begin{displaymath}
        \V{v} = k_0 \cdot (a_1,\ldots,a_n) + \sum_{i=1}^n (k_i p) \cdot \V{e}_i.
    \end{displaymath}

    for some integers $k_0,\ldots,k_n \in \Z$. Let us inspect the $i$th
    coordinate of $\V{v}$. For
    every $i \in [n]$ it holds that $k_0 \cdot a_i + k_i \cdot p \in
    \{-\floor{p/2},\ldots,\floor{p/2}\}$. Therefore, for every $i \in [n]$, the $i$th coordinate of
    $\V{v}$ is $\iv{k_0 a_i}_p$. We hence can set $\mu' \coloneqq
    k_0$. Note, that given vector $\V{v}$, the integer $\mu'$ can be computed
    efficiently because it is expressed as $\mu' = v_1 \cdot a_1^{-1}$, where
    $v_1$ is the first coordinate of $\V{v}$.
\end{proof}

Now, let us further elaborate on the subsequent steps. We run the LLL algorithm on $\Ll$.
The LLL algorithm returns a basis $\V{b}_1,\ldots,\V{b}_n$ of $\Ll$. We show that,
with high probability,
$\norm{\V{b}_i}_1 \le
p/2$ for every $i \in [n]$. If that occurs we are nearly finished.
By the~\cref{obs:mu} each of the basis vectors $\V{b}_i$ is equal to $(\iv{\mu_i
a_1}_p,\ldots,\iv{\mu_i a_n}_p)$ for some integer $\mu_i \in \Z$. Moreover,
because the vectors $\V{b}_1,\ldots,\V{b}_n$ form a basis of $\Ll$ it means that the matrix
formed by the vectors $\V{b}_1,\dots,\V{b}_n$ has a full-rank.

Therefore, to complete the proof of~\cref{thm:multipliers} it suffices to show
that $\norm{\V{b}_i}_1 \le p/2$ for every $i \in [n]$. By Cauchy-Schwartz
inequality it is actually sufficient to prove that $\norm{\V{b}_i}_2 \le
\frac{p}{2\sqrt{n}}$.

\subsection{Basis of Generating Family and Dual Basis}

Lattice $\Ll$ is given to us by a generating family in form of a
rectangular matrix. It is much more convenient to work with a basis,
especially when given by a square matrice. Therefore, we start by
determining a basis of $\Ll$. As we have already noticed, we can
assume without loss of generality that $a_1$ is invertible in
$\mathbb{Z}_p$. For every $i \in {2,\ldots,n}$, let
$\alpha_i = a_i a^{-1}_1$. The rows of the following matrix form a
basis of $\Ll$:

\iftrue
 \begin{displaymath}
     \Bb_0 = 
 \begin{pmatrix}
     1 & \alpha_2 & \alpha_3 & \cdots & \alpha_n \\
     0 & p & 0 & & 0 \\
     0 & 0 & p & & 0 \\
     \vdots & \vdots &\vdots & \ddots  & \vdots \\
     0 & 0 & 0 &\ldots & p 
 \end{pmatrix} \in \Z^{n \times n}
 \end{displaymath}
\else
\begin{displaymath}
    \Bb = 
\begin{pmatrix}
    1 & 0 & 0 & \cdots & 0 \\
    \alpha_2 & p & 0 & & 0 \\
    \alpha_3  & 0 & p & & 0 \\
    \vdots & \vdots &\vdots & \ddots  & \vdots \\
    \alpha_n & 0 & 0 &\ldots & p 
\end{pmatrix} \in \Z^{n \times n}
\end{displaymath}
\fi

In particular, this implies that the volume of $\Ll$ is $p^{n-1}$.
We can determine the dual lattice of $\Ll$, which in the
full-dimensional case is spanned by the rows of the transpose of the inverse matrix.
In our case, this is:
\iftrue
 \begin{displaymath}
     \Bb_0^\dagger = 
     (\Bb_0^{-1})^T = 
     \frac{1}{p} \cdot 
     \begin{pmatrix}
         p & 0 & 0 & \cdots & 0 \\
         -\alpha_2 & 1 & 0 &\cdots & 0 \\
         -\alpha_3 & 0 & 1 & & 0 \\
         \vdots & \vdots &\vdots & \ddots  & \vdots \\
         -\alpha_n & 0 & 0 &\ldots & 1 
     \end{pmatrix}
     \in \mathbb{Q}^{n \times n}
   \end{displaymath}
\else
\begin{displaymath}
    \Bb^\dagger = 
    \Rev((\Bb^{-1})^T) = 
    \frac{1}{p} \cdot 
    \Rev 
    \begin{pmatrix}
        p & -\alpha_2 & -\alpha_3 & \cdots & -\alpha_n \\
        0 & 1 & 0 &\cdots & 0 \\
        0 & 0 & 1 & & 0 \\
        \vdots & \vdots &\vdots & \ddots  & \vdots \\
        0 & 0 & 0 &\ldots & 1 
    \end{pmatrix}
    \in \mathbb{Q}^{n \times n}
    .
  \end{displaymath}
\fi
Observe that $p\Bb_0^\dagger$ generates the set of integral vectors
$(x_1,\ldots,x_n) \in \mathbb{Z}^n$ such that
$\sum_{i=2}^n x_i \cdot \alpha_i \equiv_p -x_1$. Or equivalently such
that:
\[
  \sum_{i=1}^{n}x_i a_i\equiv_p 0.
 \]

\subsection{Every vector of the LLL-reduced basis is probably short}
Now, let $\Bb$ be an LLL-reduced basis of the
lattice $\Ll$. We denote its row vectors by
$\V{b}_1,\ldots,\V{b}_n$ and their GSO by
$\V{b}_1^\ast,\ldots,\V{b}_n^\ast$.
Recall that the dual basis, given by
$\Bb^\dagger =    \Rev((\Bb^{-1})^T)$.


\noindent In this section, we prove the following probabilistic
property on a reduced basis:

\begin{lemma}
  \label{lem:main-bound}
  With probability $\ge 1-2^{-\Omega(n \log{n})}$, it holds that:
    \[
        \norm{\V{b}_k^\ast} \le (\CLLL)^{\frac{n-1}{2}} \Vol(\Ll)^{1/n}
     \quad \text {for every} \  k \in \{1,\ldots,n\}.
    \]
\end{lemma}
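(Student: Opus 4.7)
My plan is to derive the lemma in two stages. The first is a purely algebraic consequence of the Lov\'asz condition that reduces the claim to a lower bound on the volume of a specific sublattice of $\Ll$. The second establishes that lower bound with high probability, via a Lagarias--Odlyzko style counting argument (in the spirit of \cref{claim:LO}) that exploits the modular structure of $\Ll$.

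\textbf{Algebraic reduction.} Iterating the Lov\'asz inequality $\|\V{b}_i^\ast\|\le\CLLL\|\V{b}_{i+1}^\ast\|$ gives $\|\V{b}_j^\ast\|\ge \|\V{b}_k^\ast\|\cdot\CLLL^{-(j-k)}$ for every $j\ge k$. Multiplying these bounds for $j=k,\ldots,n$ and using $\prod_j\|\V{b}_j^\ast\|=\Vol(\Ll)$ gives
\[
\|\V{b}_k^\ast\|^{\,n-k+1}\ \le\ \CLLL^{(n-k)(n-k+1)/2}\cdot\frac{\Vol(\Ll)}{\Vol(\Ll_{<k})},
\]
where $\Ll_{<k}=\Ll\cap\mathrm{span}(\V{b}_1,\ldots,\V{b}_{k-1})$ has volume $\prod_{i<k}\|\V{b}_i^\ast\|$. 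A short rearrangement then shows that the target bound $\|\V{b}_k^\ast\|\le \CLLL^{(n-1)/2}\Vol(\Ll)^{1/n}$ is implied by
\[
(\star_k)\qquad \Vol(\Ll_{<k})\ \ge\ \Vol(\Ll)^{(k-1)/n}\cdot \CLLL^{-(k-1)(n-k+1)/2}.
\]
So it suffices to prove that $(\star_k)$ holds simultaneously for all $k\in\{1,\ldots,n\}$ with failure probability $2^{-\Omega(n\log n)}$.

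\textbf{Probabilistic part.} If $(\star_k)$ fails, then the rank-$(k-1)$ sublattice $\Ll_{<k}\subset\Ll$ has unusually small volume; by Minkowski's second theorem applied inside $\Ll_{<k}$ (or equivalently by running LLL inside it), this forces the existence of $k-1$ linearly independent vectors $\V{v}_1,\ldots,\V{v}_{k-1}\in\Ll$ of bounded norm, the precise bound being dictated by the right-hand side of $(\star_k)$ and a $k$-dependent factor. I would then bound the probability of the existence of such a tuple by a union bound, exploiting the key modular observation: every vector $\V{v}\in\Ll$ satisfies $\V{v}\equiv\mu\V{a}\pmod{p}$ for some integer $\mu$, so the reductions of $\V{v}_1,\ldots,\V{v}_{k-1}$ modulo $p$ all lie on the single one-dimensional $\mathbb{F}_p$-line $\langle\V{a}\rangle\subset\mathbb{F}_p^n$. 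I would therefore stratify the union bound by lines: for each of the $(p^n-1)/(p-1)$ lines $\langle\V{u}\rangle$ in $\mathbb{F}_p^n$, the probability that $\V{a}$ lies on it (after the rejection sampling set up earlier in this section) is $\approx 1/\Vol(\Ll)$, and the number of $(k-1)$-tuples of short integer vectors supported on that line modulo $p$ is estimated geometrically, each such vector having the form $t\V{u}+p\V{w}$ for some integer $t$ and $\V{w}\in\Z^n$. Summing over lines and union-bounding over $k$ yields the desired $2^{-\Omega(n\log n)}$ failure probability as soon as the prime $p=\Theta(\CLLL^{n^2/2})$ is chosen as in \cref{thm:multipliers}.

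\textbf{Main obstacle.} The hardest step is the counting just sketched. Compared with \cref{claim:LO}, which needs to rule out only a single spurious short lattice vector, I now have to rule out $k-1$ linearly independent short vectors simultaneously for every $k$; and the slack $\CLLL^{-(k-1)(n-k+1)/2}$ in $(\star_k)$ -- strongest at the ends $k\in\{1,n\}$ and weakest around $k\approx (n+1)/2$ -- must be balanced against the combinatorial count of short-vector tuples constrained to a common $\mathbb{F}_p$-line. Aligning the parameters so that the resulting bound stays below $2^{-\Omega(n\log n)}$ uniformly in $k$ is the main technical subtlety.
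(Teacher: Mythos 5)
Your algebraic reduction is sound (and is essentially the same manipulation of the Lov\'asz chain that the paper uses after \cref{event1}), but the probabilistic part has a genuine gap, concentrated at the upper end of the range of $k$. Your chain of implications is ``$(\star_k)$ fails $\Rightarrow$ there exist $k-1$ linearly independent primal vectors whose product of norms is at most $\frac{2^{k-1}}{V_{k-1}}\cdot\Vol(\Ll)^{(k-1)/n}\CLLL^{-(k-1)(n-k+1)/2}$ $\Rightarrow$ union-bound this away''. The trouble is that Minkowski's second theorem costs you the factor $2^{k-1}/V_{k-1}=k^{\Theta(k)}=2^{\Theta(k\log k)}$, whereas the margin you are trying to exploit, $\CLLL^{(k-1)(n-k+1)/2}$, is only $2^{\Theta(n)}$ when $k$ is close to $n$ (take $k=n$: the margin is $\CLLL^{(n-1)/2}$). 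For such $k$ the intermediate event is not rare at all: for this lattice one typically has all successive minima $\lambda_i(\Ll)\approx\sqrt{n/(2\pi e)}\,\Vol(\Ll)^{1/n}$, so the minima-attaining vectors already give $n-1$ independent vectors with product of norms about $(n/2\pi e)^{(n-1)/2}\Vol(\Ll)^{(n-1)/n}$, which lies \emph{below} your post-Minkowski threshold of roughly $(2n/\pi e)^{(n-1)/2}\CLLL^{-(n-1)/2}\Vol(\Ll)^{(n-1)/n}$, since $(4/\CLLL)^{(n-1)/2}\gg 1$ for $\CLLL$ near $\sqrt{4/3}$. So the event you propose to union-bound holds with probability close to $1$, and no refinement of the line-by-line counting can rescue the argument along this route; the loss is in the reduction step itself, not in ``aligning the parameters''. (Relatedly, your heuristic that the slack in $(\star_k)$ is strongest at the ends and weakest in the middle is backwards: $(k-1)(n-k+1)$ is maximized near $k\approx n/2$, so the middle is the easy case and $k$ near $n$ is where your approach breaks.)

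This is exactly the point where the paper brings in its key new ingredient, which your proposal omits: instead of certifying that no $(k-1)$-dimensional sublattice is unusually dense via primal short vectors, the paper only needs the two endpoint bounds of \cref{event1}, namely $\norm{\V{b}_1^\ast}\ge\Vol(\Ll)^{1/n}$ (a single-short-vector count in the primal lattice, \cref{prop:event-1}) and $\norm{\V{b}_n^\ast}\le\Vol(\Ll)^{1/n}$, the latter obtained by passing to the \emph{dual}: $\norm{\V{b}_n^\ast}=\norms{\V{b}_1^\dagger}^{-1}$, and a short first dual vector is ruled out by a direct counting argument modulo $p$ (\cref{prop:event-2}), with no Minkowski-type loss. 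The rest is then the same deterministic algebra you carried out. If you want to keep your $(\star_k)$ formulation, the fix is to derive $(\star_k)$ for large $k$ from the dual-lattice bound (using that the GSO ratios of the dual basis satisfy the same inequality \eqref{ineq:LLL3}), rather than from a primal tuple-counting argument.
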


\cref{lem:main-bound} can be proved as a consequence of the following
inequalities, which hold with overwhelming probability when the integers
$a_1,\ldots,a_n$ are drawn uniformly at random.

\begin{claim}
    \label{event1}
    With probability $\ge 1-2^{-\Omega(n \log{n})}$ it holds that:
    \begin{displaymath}
        \norm{\V{b}_n^\ast} \le \Vol(\Ll)^{1/n} \le \norm{\V{b}_1^\ast}
        .
    \end{displaymath}
\end{claim}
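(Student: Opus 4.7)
The plan is to derive both inequalities from matching lower bounds on $\lambda_1(\Ll)$ and $\lambda_1(\Ll^\dagger)$, using that $\Vol(\Ll) = p^{n-1}$, so $\Vol(\Ll)^{1/n} = p^{(n-1)/n}$. Since $\V{b}_1^\ast = \V{b}_1 \in \Ll$, the right inequality $\Vol(\Ll)^{1/n} \le \norm{\V{b}_1^\ast}$ reduces to establishing $\lambda_1(\Ll) \ge p^{(n-1)/n}$ with overwhelming probability. For the left inequality I will invoke the GSO--duality identity $\norm{\V{b}_n^\ast} \cdot \norm{\V{d}_1^\ast} = 1$ from~\cref{sec:prelim}, where $\V{d}_1$ is the first row of the reversed dual basis $\Rev((\Bb^{-1})^\trans)$. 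Because $\V{d}_1^\ast = \V{d}_1$ is itself a vector of $\Ll^\dagger$, a lower bound $\lambda_1(\Ll^\dagger) \ge p^{-(n-1)/n}$ will yield $\norm{\V{b}_n^\ast} = 1/\norm{\V{d}_1} \le p^{(n-1)/n}$.

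For the primal estimate, I will union bound over candidate short vectors. A non-zero $\V{w} \in \Z^n$ lies in $\Ll$ iff there exists $\mu \in \Z_p$ with $\V{w} \equiv \mu \V{a} \pmod{p}$. When $\norm{\V{w}}_\infty < p$ the vector $\V{w}$ is non-zero modulo $p$, which forces $\mu \ne 0$ and makes $\mu$ uniquely determined from any coordinate ratio; the condition then pins $\V{a}$ to one of $p-1$ points of $\Z_p^n$, so $\Prb[\V{w} \in \Ll] \le 1/p^{n-1}$. By a standard cube-covering argument and the Stirling bound $\Vol(\Ball_n(R)) = 2^{-\Theta(n\log n)} R^n$ recalled in~\cref{sec:prelim}, the number of non-zero $\V{w} \in \Z^n$ with $\norm{\V{w}}_2 < p^{(n-1)/n}$ is at most $2^{-\Theta(n\log n)} \cdot p^{n-1}$ (the additive $\sqrt{n}/2$ in the cube-covering is absorbed because $p^{(n-1)/n} = \CLLL^{\Theta(n^2)}$ dominates $\sqrt{n}$). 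Multiplying count by per-vector probability yields $2^{-\Omega(n\log n)}$.

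For the dual estimate, I will use the description of $p\,\Bb_0^\dagger$ established just before~\cref{lem:main-bound}: its rows generate exactly the integer solutions of $\sum_{i=1}^n x_i a_i \equiv 0 \pmod{p}$. Hence $\lambda_1(\Ll^\dagger) < p^{-(n-1)/n}$ would exhibit a non-zero $\V{x} \in \Z^n$ with $\norm{\V{x}}_2 < p^{1/n}$ and $\sum_i x_i a_i \equiv 0 \pmod{p}$. Fixing such $\V{x}$ and choosing a coordinate $j$ with $x_j \not\equiv 0 \pmod{p}$, conditioning on $(a_i)_{i \ne j}$ forces $a_j \bmod p$ to a unique residue, so the congruence holds with probability $1/p$ over the uniform $a_j$. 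The Stirling bound again gives at most $2^{-\Theta(n\log n)} \cdot p$ candidate vectors, and the union bound delivers $2^{-\Omega(n\log n)}$. A final union bound over the two bad events completes the claim.

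The main obstacle, and the reason the inequalities can be made this tight, is the choice of the counting estimate. The naive $\ell_\infty$ box count $(2R+1)^n$ would inject an extra factor of roughly $2^n$ into each union bound and miss the target probability entirely. Only the Euclidean ball count, with its $2^{-\Theta(n\log n)}$ Stirling discount, leaves just enough room on both sides to close the argument. This symmetric primal/dual saving is precisely what underpins the factor-of-two improvement in admissible density over the classical Lagarias--Odlyzko analysis, since the slack is now spent once on each lattice rather than twice on the primal alone.
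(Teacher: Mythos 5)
Your proposal is correct and follows essentially the same route as the paper: the paper likewise splits the claim into a primal bound (union bound over short vectors of $\Ll$, each occurring with probability $p^{-(n-1)}$ over the uniform residues $a_i \bmod p$, counted via $\Vol(\Ball_n(p^{(n-1)/n}+\sqrt{n}))$) and a dual bound obtained from $\norm{\V{b}_n^\ast} = \norms{\V{b}_1^\dagger}^{-1}$ together with a union bound over short dual vectors satisfying $\sum_i x_i a_i \equiv_p 0$. Your phrasing through $\lambda_1(\Ll)$, $\lambda_1(\Ll^\dagger)$ and the scaled lattice $p\,\Ll^\dagger$ is only a cosmetic repackaging of the paper's Propositions on $\V{b}_1^\ast$ and $\V{b}_1^\dagger$, with the same per-candidate probabilities and the same Stirling-type ball-volume counting.
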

\begin{proof}[Proof of~\cref{lem:main-bound} assuming~\cref{event1}]
    Let us assume that both inequalities in~\cref{event1} hold. If $k \ge
    \frac{n+1}{2}$ then by repeated application of Inequality~\eqref{ineq:LLL3} we have that
    $\norm{\V{b}_k^\ast} \le (\CLLL)^{n-k} \norm{\V{b}_n^\ast} \le
    (\CLLL)^{\frac{n-1}{2}} \Vol(\Ll)^{1/n}$, which is the desired result. Hence we need to focus on the case
    where $k < \frac{n+1}{2}$.
    Observe that by repeated application of Inequality~\eqref{ineq:LLL3} and~\cref{event1} we have
    \begin{align*}
        \norm{\V{b}_1^\ast} \cdots \norm{\V{b}_{k-1}^\ast} & \ge
        \prod_{i=1}^{k-1} \frac{\norm{\V{b}_1^\ast}}{(\CLLL)^{i-1}} =
        (\CLLL)^{-\frac{(k-1)(k-2)}{2}}\cdot \norm{\V{b}_1^\ast}^{k-1} \ge
        (\CLLL)^{-\frac{(k-1)(k-2)}{2}}\cdot \Vol(\Ll)^\frac{k-1}{n} .
    \intertext{Similarly, we have}
        \norm{\V{b}_k^\ast} \cdots \norm{\V{b}_n^\ast} & \ge \prod_{i=0}^{n-k}
        \frac{\norm{\V{b}_k^\ast}}{(\CLLL)^i} = (\CLLL)^{-\frac{(n-k)(n-k+1)}{2}}\cdot
        \norm{\V{b}_k^\ast}^{n-k+1}.
    \intertext{Recall that the volume of the lattice $\Ll$ is $\Vol(\Ll) =
    \norm{\V{b}_1^\ast} \cdots \norm{\V{b}_n^\ast}$.
    Hence by multiplying the above inequalities we have}
        \Vol(\Ll) & \ge \Vol(\Ll)^{\frac{k-1}{n}} \cdot
        \norm{\V{b}_k^\ast}^{n-k+1} \cdot (\CLLL)^{-\frac{(n-k)(n-k+1)}{2} -\frac{(k-1)(k-2)}{2}}.
    \intertext{Because $k < \frac{n+1}{2}$ we have that $(n-k)(n-k+1) + (k-1)(k-2) \le
    (n-1)(n-k+1)$. Hence}
        \Vol(\Ll) & \ge \Vol(\Ll)^{\frac{k-1}{n}} \cdot
        \norm{\V{b}_k^\ast}^{n-k+1} \cdot (\CLLL)^{-\frac{(n-1)(n-k+1)}{2}}.
    \end{align*}
    By rearranging the terms we have
    \begin{displaymath}
        \Vol(\Ll)^{\frac{n-k+1}{n}} \cdot (\CLLL)^{\frac{(n-1)(n-k+1)}{2}} \ge \norm{\V{b}_k^\ast}^{n-k+1} .
    \end{displaymath}
    After taking $(n-k+1)$th root, this yields the desired inequality.
\end{proof}

It remains to prove inequalities in~\cref{event1}. We split this proof
into~\cref{prop:event-1} and~\cref{prop:event-2}. First, we focus on the right
side of the inequality.

\begin{proposition}\label{prop:event-1}
    $$\prob{\norm{\V{b}_1^\ast}<\Vol(\Ll)^{1/n} } < 2^{-\Omega(n\log{n})}.$$
\end{proposition}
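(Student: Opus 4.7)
The plan is to reduce the statement to a lower bound on $\lambda_1(\Ll)$ and then apply a first-moment argument in the spirit of~\cref{claim:LO}. Since $\V{b}_1=\V{b}_1^\ast$ is a nonzero vector of $\Ll$, we have $\|\V{b}_1^\ast\|\ge \lambda_1(\Ll)$, so it is enough to bound the probability that some nonzero $\V{v}\in\Ll$ satisfies $\|\V{v}\|_2 < \Vol(\Ll)^{1/n}$. Recalling $\Vol(\Ll)=p^{n-1}$, set $R\coloneqq p^{(n-1)/n}$ and aim to show
\begin{displaymath}
\Pr\bigl[\exists\, \V{v}\in\Ll\setminus\{\V{0}\}:\ \|\V{v}\|_2 < R\bigr]\ \le\ 2^{-\Omega(n\log n)}.
\end{displaymath}

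The key observation is a description of $\Ll$ modulo $p$: directly from the definition of $\Ll$, a vector $\V{v}\in\Z^n$ lies in $\Ll$ iff $\V{v}\pmod p$ belongs to the one-dimensional $(\Z/p\Z)$-subspace spanned by $\V{a}\pmod p$. By the rejection-sampling setup introduced at the beginning of this section, $\V{a}\pmod p$ can be taken uniformly distributed over $(\Z/p\Z)^n$. Thus for any fixed $\V{v}$ with $\V{v}\not\equiv \V{0}\pmod p$, the event ``$\V{v}\in\Ll$'' is equivalent to ``$\V{a}\pmod p$ is one of the $p-1$ nonzero multiples of $\V{v}\pmod p$'', which has probability $(p-1)/p^n\le p^{1-n}$. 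Vectors $\V{v}\equiv \V{0}\pmod p$ with $\V{v}\ne\V{0}$ need not be considered separately, because they automatically satisfy $\|\V{v}\|_\infty\ge p > R$.

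With this per-vector bound in hand, I would count the candidate integer points in $\Ball_n(R)$ via the disjoint-unit-cubes trick: since unit cubes centered at integer points are pairwise disjoint and fit inside $\Ball_n(R+\sqrt{n}/2)$, the count is at most $\Vol(\Ball_n(R+\sqrt{n}/2)) = 2^{-\Theta(n\log n)}\cdot R^n = 2^{-\Theta(n\log n)}\cdot p^{n-1}$ by the Stirling estimate recorded in~\cref{sec:prelim}. A union bound over these candidates then yields
\begin{displaymath}
\Pr\bigl[\exists\, \V{v}\in\Ll\setminus\{\V{0}\}:\ \|\V{v}\|_2 < R\bigr]\ \le\ 2^{-\Theta(n\log n)}\cdot p^{n-1}\cdot p^{1-n}\ =\ 2^{-\Omega(n\log n)},
\end{displaymath}
which is the desired bound. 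The only conceptually nontrivial step is the $p^{1-n}$ per-vector estimate: this is where the index $p^{n-1}$ of $\Ll$ in $\Z^n$, together with the uniform distribution of $\V{a}\pmod p$, enters in a crucial way. Everything else is essentially counting, and the volume of $\Ball_n(R)$ and the lattice index $p^{n-1}$ balance up to the $2^{-\Theta(n\log n)}$ slack coming from Stirling precisely because $R=\Vol(\Ll)^{1/n}$.
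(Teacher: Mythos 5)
Your proposal is correct and follows essentially the same route as the paper: a union bound over the integer points in $\Ball_n(\Vol(\Ll)^{1/n})$, where each candidate vector lies in $\Ll$ with probability about $p^{1-n}$ by the (near-)uniformity of $\V{a} \bmod p$, balanced against the Stirling ball-volume estimate. Phrasing it via $\lambda_1(\Ll)$ and ``$\V{a}$ is a nonzero multiple of $\V{v}$ mod $p$'' is only a mild repackaging of the paper's argument, which fixes $\V{y}$ and the multiplier $\mu = y_1 a_1^{-1}$ from \cref{obs:mu}.
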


\begin{proof}
    Recall that $\Vol(\Ll) = p^{n-1}$ and $\V{b}_1 = \V{b}_1^\ast$. Let us bound the probability of
    $\norm{\V{b}_1} < p^{\frac{n-1}{n}}$. Note that in this case, \cref{obs:mu} asserts that there exists
    $\mu \in \Z$ such that

    \begin{displaymath}
        \norm{(\iv{\mu a_1}_p, \ldots, \iv{\mu a_n}_p)} < p^{\frac{n-1}{n}}.
    \end{displaymath}

    Moreover, integer $\mu$ is determined exactly as $\mu \coloneqq (\V{b}_1)_1
    \cdot a_1^{-1}$. Hence, for a fixed $\V{y} \in
    \{-\floor{p/2},\ldots,\floor{p/2}\}^n$ the
    numbers $a_2,\ldots,a_n$ via $\iv{\cdot}_p$ operations are determined. This
    means that for a given value $a_1$, the conditional probability for a fixed $\V{y} \in
    \{-\floor{p/2},\ldots,\floor{p/2}\}^n$ is:

    \begin{displaymath}
        \prob{\V{y} = (\iv{\mu a_1}_p, \ldots, \iv{\mu a_n}_p)} = \frac{1}{p^{n-1}}.
    \end{displaymath}

    which does not depend on $a_1$. Thus the unconditional probability is
    $1/p^{n-1}$ as well.  On the other hand, the number of vectors $\V{y}$ such
    that $\norm{\V{y}} < p^{\frac{n-1}{n}}$ is $\le
	\Vol(\Ball_n(p^{\frac{n-1}{n}}+\sqrt{n}))$. Therefore by the union bound we
    conclude:
    \begin{displaymath}
        \prob{\norm{\V{b}_1} \le p^{\frac{n-1}{n}}} <
		\frac{\Vol(\Ball_n(p^{\frac{n-1}{n}}+\sqrt{n}))}{p^{n-1}} < 2^{-\Omega(n \log{n})}.
        \qedhere
    \end{displaymath}
\end{proof}

For the other part of the inequality in~\cref{event1}, we need to prove that:
\begin{proposition}\label{prop:event-2}
    $$\prob{\norm{\V{b}_n^\ast} > \Vol(\Ll)^{1/n}} < 2^{-\Omega(n \log{n})}.$$
\end{proposition}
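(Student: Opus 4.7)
The plan is to pass to the dual lattice and mirror the counting argument used for \cref{prop:event-1}. The key leverage is the GSO duality relation recalled in Section~2.2: if $\V{d}_1,\ldots,\V{d}_n$ denotes the reversed dual basis $\Rev\bigl((\Bb\Bb^\trans)^{-1}\Bb\bigr)$ of the LLL-reduced primal basis $\Bb$, then $\norm{\V{b}_n^\ast}\cdot\norm{\V{d}_1^\ast}=1$. Since $\V{d}_1^\ast=\V{d}_1$ is the first basis vector, it is a nonzero element of the dual lattice $\Ll^\dagger$. Therefore the event $\norm{\V{b}_n^\ast}>\Vol(\Ll)^{1/n}=p^{(n-1)/n}$ is equivalent to the existence of a nonzero vector $\V{d}_1\in\Ll^\dagger$ with $\norm{\V{d}_1}<p^{-(n-1)/n}$, and rescaling by $p$ this in turn is equivalent to the existence of a nonzero integer vector $\V{x}=p\V{d}_1\in\Z^n$ with $\norm{\V{x}}<p^{1/n}$ that lies in $p\Ll^\dagger$. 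Using the arithmetic description of the dual derived from $\Bb_0^\dagger$, the sublattice $p\Ll^\dagger$ is precisely $\{\V{x}\in\Z^n : \sum_{i=1}^n x_i a_i\equiv 0\pmod p\}$.

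Once translated into this arithmetic form, the rest is a union bound over short lattice points, entirely parallel to \cref{claim:LO} and \cref{prop:event-1}. The number of nonzero integer vectors $\V{x}\in\Z^n$ of norm less than $p^{1/n}$ is bounded by $\Vol(\Ball_n(p^{1/n}+\sqrt n/2))$, which by Stirling equals $2^{-\Theta(n\log n)}\cdot(p^{1/n}+\sqrt n/2)^n=2^{-\Theta(n\log n)}\cdot p$, since $p^{1/n}=\Theta(\CLLL^{n/2})\gg\sqrt n$ for the choice of $p$ in \cref{thm:multipliers}. For any fixed such $\V{x}$, the condition $\norm{\V{x}}_\infty<p^{1/n}<p$ forces some coordinate $x_j$ to be nonzero modulo $p$; given uniformity of $a_1,\ldots,a_n$ modulo $p$ (which we may assume by the rejection-sampling argument at the start of Section~4.1), the congruence $\sum_i x_i a_i\equiv 0\pmod p$ fixes $a_j$ modulo $p$ and therefore holds with probability $1/p$. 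Multiplying and taking a union bound gives
\[
\prob{\norm{\V{b}_n^\ast}>\Vol(\Ll)^{1/n}}\le 2^{-\Theta(n\log n)}\cdot p\cdot\tfrac{1}{p}=2^{-\Omega(n\log n)},
\]
which is the desired estimate.

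The main conceptual step is the dualization: once one realizes that $\norm{\V{b}_n^\ast}$ is controlled by the first GSO vector of the reversed dual, the problem reduces to the same kind of short-vector analysis already performed on the primal side. The main technical obstacle is purely bookkeeping: one must verify that the factor $p$ from rescaling exactly cancels the factor $p$ arising in the volume of the ball of radius $p^{1/n}$, so that nothing stronger than uniformity of $a_i\bmod p$ and the Stirling estimate from Section~2 is needed. No genuinely new ingredient beyond what was used for \cref{prop:event-1} is required.
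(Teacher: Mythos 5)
Your proof is correct and follows essentially the same route as the paper: both pass to the dual lattice via $\norm{\V{b}_n^\ast}=\norms{\V{b}_1^\dagger}^{-1}$, use the arithmetic description of the (scaled) dual as $\{\V{x}\in\Z^n:\sum_i x_i a_i\equiv_p 0\}$, and conclude by a union bound over short vectors using uniformity of the $a_i$ modulo $p$. The only difference is bookkeeping: you count $n$-dimensional integer vectors of norm $<p^{1/n}$ with per-vector probability $1/p$, whereas the paper fixes $(s_2,\ldots,s_n)$ in an $(n-1)$-dimensional ball and charges $p^{1/n}/p$ per choice --- the same bound up to polynomial factors.
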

\begin{proof}
    Again recall that $\Vol(\Ll) = p^{n-1}$ and our goal is to bound the
    probability that $\norm{\V{b}_n^\ast} > p^{1-1/n}$. Now, we inspect the dual
    lattice $\Ll^\dagger$ and the dual basis $\Bb^\dagger$ of the
    reduced basis. Recall that $(\V{b}_1^\dagger)^\ast = \V{b}_1^\dagger$ and 
    $\norm{\V{b}_n^\ast} = \norms{\V{b}_1^\dagger}^{-1}$. Therefore, we aim to bound the
    probability $\norms{\V{b}_1^\dagger} < \frac{p^{1/n}}{p}$. 

    Because $\V{b}_1^\dagger \in \Ll^\dagger$, there exists $\V{s} =
    (s_1,\ldots,s_n) \in \Z^n$ such
    that:

    \begin{displaymath}
        \V{b}_1^\dagger = \frac{1}{p} \cdot \left(\left(p \cdot s_1 - \sum_{i=2}^n
            \alpha_i \cdot  s_i \right),
        s_2,s_3,\ldots,s_n\right).
    \end{displaymath}

    First, remark that when
    the length of $\V{b}_1^\dagger$ is bounded by $\frac{p^{1/n}}{p}$ it holds that:
    \begin{displaymath}
        \norm{(s_2,\ldots,s_n)} < p^{1/n}.
    \end{displaymath}

    Observe that the number of $(s_2,\ldots,s_n) \in \Z^{n-1}$ of
	length $< p^{1/n}$ is at most $\Vol(\Ball_{n-1}(p^{1/n}+\sqrt{n}))$. Hence, from now on, we
    fix coordinates $s_2,\ldots,s_n$ and examine the probability that the
    first coordinate of $\V{b}_1^\dagger$ is bounded. In particular, when
    $\norms{\V{b}_1^\dagger} < \frac{p^{1/n}}{p}$ it holds that:

    \begin{displaymath}
        \left| p\, s_1 - \sum_{i=2}^n \alpha_i s_i \right| < p^{1/n}
    \end{displaymath}

    Note that, if the values $s_2,\ldots,s_n$ are fixed then $s_1$ is
    determined. Namely, $s_1$ is selected in such a way that:
    \begin{equation}\label{eq:unlikely-bound}
        \sum_{i=2}^n \alpha_i s_i \equiv k \Mod{p} \text{ for some } k \in [p^{1/n}]
    \end{equation}
    Recall that the numbers $a_1,\ldots,a_n$ are selected uniformly at random
    from a range greater than $[p]$. Hence, for a fixed $\V{s}$ the
    probability that Equation~\ref{eq:unlikely-bound} holds is
    $\frac{p^{1/n}}{p}$.  Therefore, by union bound we have that
    \begin{displaymath}
        \prob{\norms{\V{b}_1^\dagger} < \frac{p^{1/n}}{p}} \le \frac{p^{1/n}}{p}
		\cdot \Vol(\Ball_{n-1}(p^{1/n}+\sqrt{n})) < 2^{-\Omega(n \log{n})}.
        \qedhere
    \end{displaymath}
\end{proof}

Applying the union bounds on the two propositions gives an upper bound
on the negation of the event considered in \cref{event1}. This concludes the
proof of~\cref{event1}.

\subsection{Proof of~\cref{thm:multipliers}}
By~\cref{lem:main-bound}, with probability $\ge 1-2^{-\Omega(n\log{n})}$, for
every $i \in [n]$ we have:
\begin{displaymath}
    \norm{\V{b}_i^\ast} \le (\CLLL)^{\frac{n-1}{2}} p^{\frac{n-1}{n}}.
\end{displaymath}
We set $p = \Theta((\CLLL)^{n^2/2} \cdot 2^{2 n
\log_2{n}})$. This means that for every $i \in [n]$ it holds that:
\begin{displaymath}
    \norm{\V{b}_i^\ast} \le \frac{p}{n^2}.
\end{displaymath}
The vectors $\V{b}_1^\ast,\ldots,\V{b}_n^\ast$ are the GSO basis of
$\V{b}_1,\ldots,\V{b}_n \in \Z^n$ and do not necessarily have integral
components. We want to bound the lengths of the $\V{b}_1,\ldots,\V{b}_n$ in the LLL-reduced basis. 
By definition of GSO basis, we have
\begin{displaymath}
    \V{b}_i = \V{b}^\ast_i + \sum_{j=1}^{i-1} \mu_{i,j} \V{b}_j^\ast.
\end{displaymath}

The size reduction condition of LLL guarantees that $|\mu_{i,j}| \le \mu < 1$.
Therefore, for every $i \in [n]$ it holds that:
\begin{displaymath}
    \norm{\V{b}_i} \le n \cdot \max_{k \in [n]} \norm{\V{b}_k^\ast} <\frac{p}{n}
    .
\end{displaymath}
By the Cauchy-Schwartz inequality, the fact that $\norm{\V{b}_i} < p/n$
implies that $\norm{\V{b}_i}_1 < p/\sqrt{n}$. In particular, it means that $\V{b}_i \in
\{-\floor{p/2},\ldots,\floor{p/2}\}^n$. Therefore, by~\cref{obs:mu} we can find
integers $\mu_1,\ldots,\mu_n$ such that:
\begin{displaymath}
    \V{b}_i = (\iv{\mu_i a_1}_p,\ldots,\iv{\mu_i a_n}_p).
\end{displaymath}
for every $i \in [n]$. Note that these vectors are linearly independent because
the vectors $\V{b}_i$ form a basis.

For the running time, observe that we need a single call to the LLL algorithm to get the basis
$\V{b}_1,\ldots,\V{b}_n$ and a linear number of arithmetic operations to retrieve
the coefficients $\mu_1,\ldots,\mu_n$.
This concludes the proof of~\cref{thm:multipliers}.\qed

\section{Extension to the block reduction}\label{sec:block}

In this Section, we prove~\cref{thm:block}. Note that the proof
of~\cref{thm:main-theorem} and the proof of~\cref{thm:multipliers} only relied on the
specific lattice reduction via~\cref{lem:main-bound}. Hence, in
order to establish~\cref{thm:block}, it is enough to
generalize~\cref{lem:main-bound} to the case of block basis reduction and prove the
following statement.

\begin{lemma}\label{lem:main-bound-BKZ}
  Let $\V{b}_1,\ldots,\V{b}_n$ be the reduced basis obtained by running the
  block lattice reduction algorithm of~\cite{gama2008finding} with parameter
  $\CGamma$ on the lattice $\Ll$ defined in~\eqref{eq:ll-definition}.
  With probability $\ge 1-2^{-\Omega(n \log{n})}$, it holds that:
    \[
        \norm{\V{b}_k^\ast} \le C_{\CGamma} \cdot \CGamma^{\frac{n-1}{2}} \cdot \Vol(\Ll)^{1/n}
     \quad \text {for every} \  k \in \{1,\ldots,n\},
    \]
    where $C_{\CGamma}$ is a constant that only depends on $\CGamma$.
\end{lemma}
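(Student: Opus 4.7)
The strategy is to follow the proof of Lemma~\ref{lem:main-bound} verbatim, isolating the single place where it uses a property specific to LLL. That proof had three ingredients: (i) Proposition~\ref{prop:event-1}, giving $\norm{\V{b}_1^\ast} \ge \Vol(\Ll)^{1/n}$ with overwhelming probability; (ii) Proposition~\ref{prop:event-2}, giving $\norm{\V{b}_n^\ast} \le \Vol(\Ll)^{1/n}$ with overwhelming probability; and (iii) the LLL slope inequality~\eqref{ineq:LLL3}, $\norm{\V{b}_i^\ast} \le \CLLL \cdot \norm{\V{b}_{i+1}^\ast}$, used to interpolate between the two extremes.

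My first observation is that ingredients (i) and (ii) are in fact algorithm-agnostic. Proposition~\ref{prop:event-1} is a union bound over integer points of norm less than $p^{(n-1)/n}$, using only that $\V{b}_1$ is a lattice vector of $\Ll$ and that each such candidate belongs to $\Ll$ with probability $1/p^{n-1}$ over the random choice of $a_1,\ldots,a_n$. Proposition~\ref{prop:event-2} is the mirror argument on the dual lattice via the identity $\norm{\V{b}_n^\ast} = \norms{\V{d}_1^\dagger}^{-1}$, which is valid for any basis. Neither proof inspects the reduction algorithm, so both statements carry over verbatim to the output of the Gama--Nguyen block reduction, still with failure probability $2^{-\Omega(n \log n)}$.

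Hence the only step that needs adaptation is (iii). My plan is to invoke, from~\cite{gama2008finding}, a per-index slope bound of the form
\begin{displaymath}
  \norm{\V{b}_i^\ast} \le D_{\CGamma} \cdot \CGamma \cdot \norm{\V{b}_{i+1}^\ast} \quad \text{for every } 1 \le i < n,
\end{displaymath}
where $D_\CGamma$ depends only on $\CGamma$ (and on the block size used to achieve that parameter). Substituting this inequality in place of~\eqref{ineq:LLL3} inside the case analysis of the proof of Lemma~\ref{lem:main-bound} is purely mechanical. For $k \ge (n+1)/2$ one obtains $\norm{\V{b}_k^\ast} \le (D_\CGamma \CGamma)^{n-k} \norm{\V{b}_n^\ast}$, and for $k < (n+1)/2$ the identical product-volume manipulation based on $\Vol(\Ll) = \prod_i \norm{\V{b}_i^\ast}$ yields the desired upper bound. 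Collecting the $D_\CGamma$ contributions into a single constant $C_\CGamma$ depending only on $\CGamma$ produces the stated $C_\CGamma \cdot \CGamma^{(n-1)/2} \cdot \Vol(\Ll)^{1/n}$.

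The main obstacle is precisely this last translation. LLL delivers the per-step slope essentially by construction through the Lov\'asz condition, whereas block reduction controls longer blocks of Gram--Schmidt vectors simultaneously rather than consecutive pairs. One therefore has to turn the ``block Hermite factor'' guaranteed by~\cite{gama2008finding} into the per-index inequality above, which is what introduces the multiplicative constant $D_\CGamma$ and, accordingly, the factor $C_\CGamma$ in the statement of the lemma. Once this technical translation is in place, the remainder of the proof is a direct transcription of Lemma~\ref{lem:main-bound}, and Theorem~\ref{thm:block} then follows by exactly the same reduction used to derive Theorem~\ref{thm:main} from Lemma~\ref{lem:main-bound}.
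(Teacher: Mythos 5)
Your split into ingredients (i)--(iii) is right, and your observation that \cref{prop:event-1} and \cref{prop:event-2} (hence \cref{event1}) depend only on the lattice $\Ll$ and not on which reduction algorithm produced the basis is exactly how the paper's proof begins. The gap is in step (iii): a per-index slope bound $\norm{\V{b}_i^\ast} \le D_{\CGamma}\cdot\CGamma\cdot\norm{\V{b}_{i+1}^\ast}$ cannot carry the argument. Chaining a consecutive-index inequality over $n-k=\Theta(n)$ steps accumulates the constant, giving $\norm{\V{b}_k^\ast} \le (D_{\CGamma})^{n-k}\,\CGamma^{\,n-k}\,\norm{\V{b}_n^\ast}$, and $(D_{\CGamma})^{n-k}$ is $2^{\Theta(n)}$, not a constant; the same blow-up (after taking the $(n-k+1)$th root) appears in the product--volume case. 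Moreover, no per-index bound with amortized rate $\CGamma$ and a harmless constant is available from block reduction: inside a block the only control on consecutive Gram--Schmidt norms is that the output is LLL-reduced, i.e.\ rate $\CLLL$, and $\CLLL>\CGamma$ when $\CGamma$ is close to $1$. Taking $D_{\CGamma}=\CLLL/\CGamma$ (the best one can do per index) and chaining merely reproduces the LLL bound $\CLLL^{(n-1)/2}$, which is precisely what the lemma is supposed to improve upon.

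The missing idea is to replace the consecutive-index inequality by a \emph{pairwise} one in which the constant loss is paid once, independently of the distance $i-j$: for all $1\le j<i\le n$, $\norm{\V{b}_j^\ast} \le (\CLLL)^{2w}\cdot\CGamma^{\,i-j}\cdot\norm{\V{b}_i^\ast}$, which is inequality~\eqref{ineq:inside2} in the paper. It follows by hopping between block boundaries using the primal--dual inequality~\eqref{ineq:block-ineq}, which gives rate $\CGamma^{w}$ across each block of size $w$, and paying the LLL factor $(\CLLL)^{\pm w}$ only at the two endpoints to move from $i$ and $j$ to the nearest boundary indices via~\eqref{ineq:LLL3}; since $w$ depends only on $\CGamma$, this endpoint loss is a constant. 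With this two-index inequality the interpolation of \cref{lem:main-bound} goes through essentially as you describe: in the case $k\ge\frac{n+1}{2}$ one gets a single factor $(\CLLL)^{2w}$, and in the case $k<\frac{n+1}{2}$ the accumulated $(\CLLL)^{2w(n-1)}$ disappears after the $(n-k+1)$th root because $n-k+1\ge\frac{n+1}{2}$, yielding $C_{\CGamma}=(\CLLL)^{4w}$. So the proof needs the ``constant once, rate $\CGamma$ per index'' formulation, not the per-step slope bound you propose.
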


Before we prove~\cref{lem:main-bound-BKZ}, let us elaborate on the block basis
reduction from~\cite{gama2008finding}. Here, we follow the description from the
textbook~\cite[Chapter 2, Sliding Algorithm]{nguyen2010lll}. The main parameter in the block-reduction is the
block size $w > 2$. Gama and Nguyen~\cite{gama2008finding}, actually
parameterize their reduction with respect to the block-size. For any integer $k$
let $\gamma_k$ be the Hermite constant (see~Chapter 2~\cite{nguyen2010lll}).
The blocksize $w$ of the lattice reduction of Gama and Nguyen is the smallest
integer such that $\CGamma \ge (\gamma_w)^{1/(w-1)}$. Note that $\gamma_n =
\Theta(n)$, so $w$ is a properly defined constant that depends on the choice of
$\CGamma$.

The block-reduction of~\cite{gama2008finding} has two important properties.
First, it returns a basis that is \emph{block-Mordell-reduced}. The only
property about block-Mordell-reduced basis we need is the following
inequality:

\begin{claim}[Primal-Dual inequality, Chapter 2, Lemma 11, Equality (2.48) in \cite{nguyen2010lll}]
    Let $\V{b}_1,\ldots,\V{b}_n$ be a block-Mordell-reduced basis of the lattice
    $L$ with blocksize $w \ge 2$, then:
    \begin{align}\label{ineq:block-ineq}
        \frac{\norm{\V{b}_j^\ast}}{\norm{\V{b}_{j+w}^\ast}} \le (\gamma_w)^{w/(w-1)} \le \CGamma^w
    \end{align}
    where $j \in \{1,\ldots,n-w\}$ such that $j \equiv_w 1$.
\end{claim}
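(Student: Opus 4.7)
The plan is to unfold the definition of \emph{block-Mordell reduction} into its two constituent local conditions — HKZ reduction on each length-$w$ primal block, and a dual-HKZ (a.k.a.\ DSVP) reduction on each length-$w$ ``shifted'' block — and then to combine one application of Minkowski's first theorem in each direction. The key idea is that the two local conditions together telescope across the $w$ intermediate GSO norms sitting between $\V{b}_j^\ast$ and $\V{b}_{j+w}^\ast$, so that the product of these intermediate norms cancels when we multiply the primal and dual bounds.

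On the primal side, the condition asserts that $\V{b}_j, \V{b}_{j+1}, \ldots, \V{b}_{j+w-1}$, projected onto the orthogonal complement of $\V{b}_1, \ldots, \V{b}_{j-1}$, form an HKZ-reduced basis of a rank-$w$ lattice whose GSO vectors are precisely $\V{b}_j^\ast, \ldots, \V{b}_{j+w-1}^\ast$. In particular $\norm{\V{b}_j^\ast}$ attains $\lambda_1$ of this projected lattice, and Minkowski's first theorem with Hermite constant $\gamma_w$ gives $\norm{\V{b}_j^\ast}^{w} \le \gamma_w^{w/2} \prod_{i=0}^{w-1}\norm{\V{b}_{j+i}^\ast}$, which after cancelling one factor of $\norm{\V{b}_j^\ast}$ becomes
\[
    \norm{\V{b}_j^\ast}^{w-1} \,\le\, \gamma_w^{w/2}\, \prod_{i=1}^{w-1}\norm{\V{b}_{j+i}^\ast}.
\]
On the dual side, the second condition asserts that the shifted block $\V{b}_{j+1}, \ldots, \V{b}_{j+w}$, projected onto the orthogonal complement of $\V{b}_1, \ldots, \V{b}_j$, is dual-HKZ-reduced, i.e., its reversed-dual basis is HKZ-reduced. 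The duality identity $\norm{\V{b}_i^\ast}\cdot\norm{\V{d}_{w-i+1}^\ast}=1$ recalled in the preliminaries then identifies $1/\norm{\V{b}_{j+w}^\ast}$ with $\lambda_1$ of the dual of this projected rank-$w$ lattice, whose volume is the reciprocal of $\prod_{i=1}^{w}\norm{\V{b}_{j+i}^\ast}$. A second application of Minkowski's theorem, this time on the dual lattice, produces the symmetric inequality
\[
    \prod_{i=1}^{w-1}\norm{\V{b}_{j+i}^\ast} \,\le\, \gamma_w^{w/2}\,\norm{\V{b}_{j+w}^\ast}^{w-1}.
\]

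Chaining the two displays kills the intermediate product and leaves $\norm{\V{b}_j^\ast}^{w-1} \le \gamma_w^{w}\,\norm{\V{b}_{j+w}^\ast}^{w-1}$; taking $(w-1)$-th roots gives exactly the first half of the claim, $\norm{\V{b}_j^\ast}/\norm{\V{b}_{j+w}^\ast} \le \gamma_w^{w/(w-1)}$. The second half $\gamma_w^{w/(w-1)} \le \CGamma^w$ is immediate from the choice of $w$ fixed just above the claim as the smallest integer satisfying $\CGamma \ge \gamma_w^{1/(w-1)}$. The main obstacle I expect is purely definitional bookkeeping: different sources formulate slide, block, and Mordell reductions slightly differently, and the clean two-line chain above works only once the primal and the dual local conditions have been stated precisely enough that each one invokes exactly one Minkowski bound in its own rank-$w$ lattice. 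Once those two conditions are pinned down in the form above, the proof reduces to the pair of Minkowski applications and the algebraic chaining already displayed.
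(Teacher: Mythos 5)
Your argument is correct and is essentially the proof of the cited result: the paper itself does not reprove this claim but imports it from Nguyen's chapter, where Equality (2.48) is obtained exactly as you do, by applying Minkowski's bound (Hermite's constant $\gamma_w$) once to the projected primal block starting at $\V{b}_j^\ast$ and once to the dual of the shifted block ending at $\V{b}_{j+w}^\ast$, and multiplying so the intermediate GSO norms cancel; the final step $\gamma_w^{w/(w-1)} \le \CGamma^w$ is indeed immediate from the choice of $w$. The only cosmetic remark is that you only need the SVP-type conditions (first primal vector attains $\lambda_1$ of its block, last dual vector attains $\lambda_1$ of the dual block), which the HKZ/dual-HKZ formulation you invoke implies, so nothing is lost.
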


The second property of the algorithm from~\cite{gama2008finding} is that its
output basis is also LLL-reduced (see~\cite[Chapter 2, Sliding Algorithm,
Algorithm 6]{nguyen2010lll}). As a consequence, we can also use inequality~\eqref{ineq:LLL3}.

\begin{claim}
    For all $1 \le j < i \le n$, it holds that
    \begin{align}\label{ineq:inside2}
        \norm{\V{b}_i^\ast} \ge (\CLLL)^{-2w} \cdot \CGamma^{j-i} \cdot \norm{\V{b}_j^\ast}
        .
    \end{align}
\end{claim}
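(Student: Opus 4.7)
The plan is to chain the two available inequalities in a telescoping fashion: \eqref{ineq:block-ineq} traverses distance $w$ at cost $\CGamma^{-w}$ but requires its lower anchor to be $\equiv 1 \pmod{w}$, whereas \eqref{ineq:LLL3} is weaker per step (factor $\CLLL^{-1}$) but applies everywhere. The total loss will therefore be dominated by $\CGamma^{-(i-j)}$ across the block-aligned middle, with only $O(w)$ of the weaker LLL factors spent to align the two endpoints to block boundaries.

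First I would set $j' := w\lceil (j-1)/w \rceil + 1$ (the smallest index $\ge j$ with $j' \equiv_w 1$) and $i' := w\lfloor (i-1)/w \rfloor + 1$ (the largest index $\le i$ with $i' \equiv_w 1$), so that $0 \le j' - j \le w-1$ and $0 \le i - i' \le w-1$. The proof then splits on whether these aligned indices overlap.

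In the main case $j' \le i'$, I would chain three bounds. Iterating \eqref{ineq:LLL3} from $j$ up to $j'$ yields $\norm{\V{b}_{j'}^\ast} \ge \CLLL^{-(j'-j)} \norm{\V{b}_j^\ast}$. Applying \eqref{ineq:block-ineq} $(i'-j')/w$ times gives $\norm{\V{b}_{i'}^\ast} \ge \CGamma^{-(i'-j')} \norm{\V{b}_{j'}^\ast}$; this is legal because every intermediate anchor $j' + kw$ is also $\equiv 1 \pmod{w}$. Then \eqref{ineq:LLL3} from $i'$ up to $i$ gives $\norm{\V{b}_i^\ast} \ge \CLLL^{-(i-i')} \norm{\V{b}_{i'}^\ast}$. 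Multiplying the three, and using $(j'-j)+(i-i') \le 2(w-1) \le 2w$ together with $i' - j' \le i - j$ (so $\CGamma^{-(i'-j')} \ge \CGamma^{j-i}$), yields exactly the claimed inequality.

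The residual case $j' > i'$ occurs exactly when $j$ and $i$ lie in the same block, whence $i - j \le w-1$. Here pure LLL already suffices: $\norm{\V{b}_i^\ast} \ge \CLLL^{-(i-j)} \norm{\V{b}_j^\ast} \ge \CLLL^{-2w}\,\CGamma^{j-i} \norm{\V{b}_j^\ast}$, because $\CLLL^{-(i-j)} \ge \CLLL^{-2w}$ and $\CGamma^{j-i} \le 1$ (since $\CGamma > 1$ and $j < i$). There is no substantive obstacle; the only point requiring care is that \eqref{ineq:block-ineq} applies only when the lower anchor is $\equiv_w 1$, which is precisely the reason for aligning the endpoints to $j'$ and $i'$ before chaining.
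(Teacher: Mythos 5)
Your proof is correct and follows essentially the same route as the paper: use the LLL inequality~\eqref{ineq:LLL3} to move each endpoint to a block-aligned index (paying at most $(\CLLL)^{w}$ on each side) and then telescope the primal--dual inequality~\eqref{ineq:block-ineq} across the intermediate blocks, giving the factor $\CGamma^{j-i}$. Your explicit handling of the same-block case and the alignment indices $j',i'$ is just a slightly more careful write-up of the paper's $\ell(\cdot)$, $r(\cdot)$ bookkeeping.
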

\begin{proof}

    Note that blocks in block-reduced algorithm overlap on the indices $\equiv_w
    1$~\cite{gama2008finding}. Hence, for the sake of clarity for $i \in [n]$
    let $q(i) \in \N$ to be the index of the block that $i$ is contained, i.e.,
    $i \in [q(i) w+1,(q(i)+1)w]$.  Let $\ell(i) = q(i)w + 1$ be the index of the
    first element in the $i$th block. Similarly let $r(i) = \ell(i)$ when $i
    \equiv_w 1$ and $r(i) = \min\{n,\ell(i)+1\}$ otherwise, be the index of the
    last element of the block.

    Because $\V{b}_1,\ldots,\V{b}_n$ is also LLL reduced we know that that by
    repeated application of inequality~\eqref{ineq:LLL3}, for every $k \in [n]$
    it holds:
    \begin{align}\label{eq:block-l-k}
        (\CLLL)^{-w} \norms{\V{b}_{\ell(k)}^\ast} \le \norm{\V{b}_k^\ast} \le (\CLLL)^{w} \norms{\V{b}_{r(k)}^\ast}.
    \end{align}
    Next, we use~\cref{ineq:block-ineq}, that for $j < i$ says:
    \begin{align}\label{eq:blocks-cgamma}
        \norms{\V{b}_{\ell(i)}^\ast} \ge  \CGamma^{(\ell(i) - r(j)) w} \cdot \norms{\V{b}_{r(j)}^\ast}.
    \end{align}
    Hence, by combining above we get:
    \begin{align*}
        \norms{\V{b}_i^\ast} & \ge (\CLLL)^{-w} \cdot
        \norms{\V{b}^\ast_{\ell(i)}} &
        \color{gray}{\text{(by~\eqref{eq:block-l-k})}}\\
        & \ge (\CLLL)^{-w} \cdot \CGamma^{(\ell(i) - r(j)) w} \cdot
        \norms{\V{b}_{r(j)}^\ast}& 
        \color{gray}{\text{(by~\eqref{eq:blocks-cgamma})}}\\
        & \ge (\CLLL)^{-2w} \cdot \CGamma^{i-j} \cdot \norms{\V{b}^\ast_j},&
        \color{gray}{\text{(by~\eqref{eq:block-l-k})}}\\
    \end{align*}
    where the last inequality follows because $\ell(i)w - r(j)w \ge i-j$.
    %
\end{proof}

\begin{proof}[Proof of~\cref{lem:main-bound-BKZ}]
    The probabilistic event in \cref{event1} is a function of the lattice $\Ll$
    itself and not of any specific basis of it. Thus, it remains valid for the basis
    produced by the block reduction algorithm.  Hence, with $\ge 1-2^{-\Omega(n
    \log{n})}$ we have that $\norm{\V{b}_n^\ast} \le \Vol(\Ll)^{1/n} \le
    \norm{\V{b}_1^\ast}$. First, consider the case $k \ge \frac{n+1}{2}$.
    \begin{displaymath}
        \norm{\V{b}_k^\ast} \le (\CLLL)^{2w} \cdot \CGamma^{n-k} \norm{\V{b}_n^\ast}.
    \end{displaymath}
    This concludes the proof for the case $k \ge \frac{n+1}{2}$ as $n-k \le (n-1)/2$.
    It remains to consider the case $k < \frac{n+1}{2}$. By repeated
    application of~\eqref{ineq:inside2} we have:
    \begin{align*}
        \norm{\V{b}_1^\ast} \cdots \norm{\V{b}_{k-1}^\ast} & \ge
        \prod_{i=1}^{k-1} (\CLLL)^{-2w} \cdot \CGamma^{-(i-1)} \norm{\V{b}_1^\ast}\\
        & \ge (\CLLL)^{-2w(k-1)} \cdot \CGamma^{-(k-2)(k-1)/2} \cdot
        \norm{\V{b}_1^\ast}^{k-1}.
    \end{align*}
    Similarly:
    \begin{align*}
        \norm{\V{b}_k^\ast} \cdots \norm{\V{b}_{n}^\ast} & \ge 
        \prod_{i=k}^{n} (\CLLL)^{-2w} \cdot \CGamma^{-(i-k)} \norm{\V{b}_k^\ast}\\
        & \ge (\CLLL)^{-2w(n-k)} \cdot \CGamma^{-(n-k)(n-k+1)/2} \cdot
        \norm{\V{b}_k^\ast}^{n-k}.
    \end{align*}
    Hence, by multiplying these two inequalities and by $\prod_{i=1}^n \norm{\V{b}_i^\ast} = \Vol(\Ll)$ we conclude that:
    \begin{align*}
        \Vol(\Ll) \ge (\CLLL)^{-2w(n-1)} \cdot \CGamma^{-(n-k)(n-k+1)/2 -
        (k-2)(k-1)/2} \cdot \norm{\V{b}_k^\ast}^{n-k+1} \cdot \norm{\V{b}_1^\ast}^{k-1}.
    \end{align*}
    Because $k < \frac{n+1}{2}$ we have that $(n-k)(n-k+1) + (k-1)(k-2) \le
    (n-1)(n-k+1)$. Hence,
    \begin{align*}
        \Vol(\Ll) \ge (\CLLL)^{-2w(n-1)} \cdot \CGamma^{-(n-1)(n-k+1)/2}
        \cdot \norm{\V{b}_k^\ast}^{n-k+1} \norm{\V{b}_1^\ast}^{k-1}.
    \end{align*}
    Next, we use~\cref{event1} with $\norm{\V{b}_1^\ast} \ge \Vol(\Ll)^{1/n}$ to have
    \begin{align*}
        \Vol(\Ll)^{\frac{n-k+1}{n}} \ge (\CLLL)^{-2w(n-1)} \cdot \CGamma^{-((n-1)(n-k+1)/2)}
        \cdot \norm{\V{b}_k^\ast}^{n-k+1}.
    \end{align*}
    After taking $(n-k+1)$th root we conclude that:
    \begin{align*}
        (\CLLL)^{2w(n-1)/(n-k+1)} \cdot \CGamma^{(n-1)/2} \cdot \Vol(\Ll) \ge \norm{\V{b}_k^\ast}.
    \end{align*}
    Finally, observe that as $k < (n+1)/2$ this means that
    $(\CLLL)^{2w(n-1)/(n-k+1)} \le (\CLLL)^{4w} = C_{\CGamma}$ is a constant
    that depends only on $\CGamma$, since $\CLLL$ is a constant and $w$ depends
    only on $\CGamma$. Hence:
    \begin{align*}
        C_{\CGamma} \cdot \CGamma^{(n-1)/2} \cdot \Vol(\Ll)^{1/n} \ge
        \norm{\V{b}_k^\ast}. & \qedhere
    \end{align*}
\end{proof}

\section{Conclusion}

Subset Sum is one of the most fundamental problems in theoretical computer
science. To this day, there are three methods to solve it:
\begin{itemize}
        \setlength\itemsep{0.2em}
    \item \textbf{Dense regime}: Dynamic programming in $\Os(T)$, where $T$ is the target value,
    \item \textbf{Threshold regime}: Meet-in-the-middle algorithm that runs in
        $\Os(2^{n/2})$ time,
    \item \textbf{Sparse regime}: Lattice-reduction, for numbers of size
        $\LORange = \Omega(\gamma^{n^2})$.
\end{itemize}
After almost 40 years, we are aware only of improvements to the exponent in the
Subset Sum problem. The exact $\Os(2^{n/2})$ algorithm was enhanced to
$\Oh(2^{0.291n})$ in the average case setting~\cite{generic-knapsack2}.  In the
dense regime, the potential improvements are even smaller, as we know that no
$T^{1-\epsilon} \cdot \poly(n)$ algorithm is possible for any constant $\eps >
0$ assuming SETH~\cite{abboud2022seth}. Hence, researchers have focused on optimizing
polynomial factors in $n$ (see
e.g.,~\cite{bringmann2017near,chen2024faster,jin20230,bringmann2023knapsack}).

In this paper, we improved the Lagarias-Odlyzko algorithm for average-case
Subset Sum and provided a method that works for almost all instances with
numbers of size $\Omega(\sqrt{\LORange})$. In terms of the constant in the
exponent, this represents one of the most substantial improvements to the regime
of solvable instances since the 1980s.

We would like to conclude with potential improvements and applications. For
example, Schnorr~\cite{schnorr} presented a basis reduction that guarantees a
$2^{\Oh(n \log \log n/\log n)}$-approximation to CVP and theoretically improves
LLL. Our method incurs essentially no additional and can just be plugged in even
with these more computationally expensive lattice-reductions to improve the
density range (see~\cref{sec:block} where we combine it with textbook variant of
Schnorr~\cite{schnorr} lattice-reduction). However, to achieve even higher improvements,
substantially new ideas would need to be presented as improving lattice
reduction of Schnorr~\cite{schnorr} is a major open question (see,
e.g.,~\cite{DBLP:conf/eurocrypt/AggarwalLS21} for recent progress on CVP). We
believe that similarly to the Lagarias-Odlyzko algorithm, our methods have
potential to be applicable outside of the realm of Knapsack problems (see for
example,~\cite{zadik2022lattice,lyubashevsky2005parity,lyubashevsky2005parity}).

\bibliographystyle{abbrv}
\bibliography{bib}


\end{document}